
\documentclass[letterpaper, 10 pt, conference]{ieeeconf}  

\IEEEoverridecommandlockouts                              
\overrideIEEEmargins



\usepackage[utf8]{inputenc}
\usepackage[T1]{fontenc}
\usepackage{type1ec}
\usepackage{stmaryrd}
\usepackage{amsmath}

\let\proof\relax
\let\endproof\relax
\usepackage{amsthm}

\usepackage{amssymb}
\usepackage{dsfont}
\usepackage{mathtools}
\usepackage{thm-restate}
\usepackage{xspace}
\usepackage[textwidth=2cm,textsize=small]{todonotes}

\usepackage{enumitem}
\usepackage{hyperref}
\usepackage{cleveref}
\usepackage{stackrel}
\usepackage{MnSymbol} 
\usepackage{todonotes}
\usepackage{graphicx}
\usepackage{etoolbox}
\usepackage{textcase}
\usepackage[bb=boondox]{mathalfa}
\usepackage{verbatim}

\usepackage[contents={}]{background}
\newcommand\BGfrom[1]{%
\AddEverypageHook{%
  \ifnum\value{page}>\numexpr#1-1\relax
    \backgroundsetup{
      contents={Over the limit},
      color=red,
      scale=10,
      opacity=0.5
    }%
  \fi
  \BgMaterial%
  }%
}

\usepackage{amstext} 
\usepackage{array}   
\newcolumntype{C}{>{$}c<{$}} 
\newcolumntype{L}{>{$}l<{$}} 

\hypersetup{
    colorlinks,
    linkcolor={red!50!black},
    citecolor={blue!50!black},
    urlcolor={blue!80!black}
}

\usepackage{tikz}
\usetikzlibrary{positioning}
\usetikzlibrary{decorations.markings}

\makeatletter
\patchcmd{\@sect}{\uppercase}{\MakeTextUppercase}{}{}
\patchcmd{\@sect}{\uppercase}{\MakeTextUppercase}{}{}
\makeatother

\newcommand{\tuple}[1]{\left(#1\right)}

\newcommand{\e}{\varepsilon}

\newcommand{\ignore}[1]{}


\newcommand{\wrt}{w.r.t.}

\newcommand{\cf}{cf.}
\newcommand{\eg}{e.g.}

\newcommand{\aka}{a.k.a.}
\newcommand{\ie}{i.e.}

\newcommand{\rhs}{r.h.s.}
\newcommand{\lhs}{l.h.s.}

\newif\ifstartedinmathmode
\newcommand*{\st}{
  \relax\ifmmode\startedinmathmodetrue\else\startedinmathmodefalse\fi
  \ifstartedinmathmode{\;\cdot\;}\else{s.t.}\fi%
}


\newcommand{\D}{\mathbb D}
\newcommand{\N}{\mathbb N}

\newcommand{\Q}{\mathbb Q}
\newcommand{\R}{\mathbb R}

\renewcommand{\SS}{\mathcal S}
\newcommand{\TT}{\mathcal T}

\let\oldcirc\circ
\let\circ\relax
\DeclareMathOperator{\circ}{\oldcirc\,}



\makeatletter
\providecommand*{\shuffle}{%
  \mathbin{\mathpalette\shuffle@{}}%
}
\newcommand*{\shuffle@}[2]{%
  \sbox0{$#1\vcenter{}$}%
  \kern .15\ht0 
  \rlap{\vrule height .25\ht0 depth 0pt width 2.5\ht0}%
  \raise.1\ht0\hbox to 2.5\ht0{%
    \vrule height 1.75\ht0 depth -.1\ht0 width .17\ht0 %
    \hfill
    \vrule height 1.75\ht0 depth -.1\ht0 width .17\ht0 %
    \hfill
    \vrule height 1.75\ht0 depth -.1\ht0 width .17\ht0 %
  }%
  \kern .15\ht0 
}
\makeatother

\newcommand{\limplies}{\Rightarrow}

\newcommand{\card}[1]{\left|#1\right|}
\newcommand{\length}[2]{\left|#2\right|_{#1}}
\newcommand{\set}[1]{\left\{#1\right\}}
\newcommand{\setof}[2]{\set{#1 \;\middle|\; #2}}

\newcommand{\poly}[2]{#1[#2]}
\newcommand{\polyof}[3]{\poly {#1} {#2 \mid #3}}

\newcommand{\powerseries}[2]{#1[\![#2]\!]}

\newcommand{\series}[2]{#1\langle\!\langle#2\rangle\!\rangle}

\newcommand{\ideal}[1]{\langle#1\rangle}
\newcommand{\idealof}[2]{\ideal{#1 \mid #2}}

\newcommand{\sem}[1]{\left\llbracket#1\right\rrbracket}
\newcommand{\support}[1]{\mathsf{supp}(#1)}

\newcommand{\coefficient}[2]{[#1]#2}
\newcommand{\restrict}[2]{\left.#1\right|_{#2}}
\newcommand{\restrictsmall}[2]{#1|_{#2}}

\newcommand{\deriveleft}[1]{\delta^L_{#1}}
\newcommand{\deriveright}[1]{\delta^R_{#1}}

\let\oldpartial\partial
\renewcommand{\partial}[2]{\oldpartial_{#1}{#2}}

\newcommand{\zero}{\mathbb 0}


\newcolumntype{A}{
 >{$}r<{$}
 @{\extracolsep{0pt}}
 >{${}} l <{$}
} 

\newcolumntype{M}{
 >{$}c<{$}
}

\theoremstyle{plain}
\newtheorem{theorem}{Theorem}
\newtheorem*{theorem*}{Theorem}
\newtheorem{lemma}{Lemma}
\newtheorem{corollary}{Corollary}
\newtheorem{claim}{Claim}
\newtheorem*{claim*}{Claim}

\newtheorem*{proposition*}{Proposition}

\newtheorem*{fact*}{Fact}

\theoremstyle{remark}
\newtheorem{remark}{Remark}

\theoremstyle{definition}

\theoremstyle{problem}



\newenvironment{claimproof}{\begin{proof}[Proof of the claim]}{\end{proof}}

\crefname{equation}{}{}
\crefname{definition}{Definition}{Definitions}
\crefname{claim}{Claim}{Claims}
\crefname{proposition}{Proposition}{Propositions}
\crefname{lemma}{Lemma}{Lemmas}
\crefname{corollary}{Corollary}{Corollaries}
\crefname{theorem}{Theorem}{Theorems}
\crefname{figure}{Fig.}{Figures}
\crefname{fact}{Fact}{Facts}
\crefname{example}{Example}{Examples}
\crefname{remark}{Remark}{Remarks}
\crefname{section}{§}{§§}
\crefname{subsection}{§}{§§}
\crefname{subsubsection}{§}{§§}
\crefname{enumi}{}{}
\crefname{problem}{Problem}{Problems}



\title{\LARGE \bf
Algorithmic analysis of systems with affine input and polynomial state
\thanks{Supported by the ERC grant INFSYS, agreement no.~950398.}
}

\author{Lorenzo Clemente$^{\dag}$
\thanks{$^{\dag}$ Faculty of Mathematics, Informatics, and Mechanics, University of Warsaw, Poland
        {\tt\small clementelorenzo@gmail.com}}%
}

\begin{document}

\maketitle
\thispagestyle{empty}
\pagestyle{empty}

\begin{abstract}
        The goal of this paper is to provide exact and terminating algorithms
        for the formal analysis of deterministic continuous-time control systems with affine input and polynomial state dynamics
        (in short, \emph{polynomial systems}).
        We consider the following semantic properties:
        zeroness and equivalence, input independence, linearity, and analyticity.
        Our approach is based on Chen-Fliess series,
        which provide a unique representation of the dynamics of such systems via their formal generating series.
        
        Our starting point is Fliess' seminal work showing how the semantic properties above
        are mirrored by corresponding combinatorial properties on generating series.
        Next, we observe that the generating series of polynomial systems
        coincide with the class of \emph{shuffle-finite series},
        a nonlinear generalisation of Schützenberger's rational series
        which we have recently studied in the context of automata theory and enumerative combinatorics.
        We exploit and extend recent results in the algorithmic analysis of shuffle-finite series
        (such as zeroness, equivalence, and commutativity)
        to show that the semantic properties above can be decided exactly and in finite time for polynomial systems.
        Some of our analyses rely on a novel technical contribution,
        namely that shuffle-finite series are closed under support restrictions with commutative regular languages,
        a result of independent interest.
\end{abstract}



\section{Introduction}

Chen-Fliess series are applied in control theory
to represent the local input-output behaviour of systems with real analytic state dynamics
and affine dependence on the input (in short, \emph{analytic systems})\-
\cite[Sec.~4.1]{NijmeijerVanDerSchaft:1990}, \cite[Sec.~3]{Isidori:NCS:1995}.
They subsume Volterra series with real analytic kernel functions~\cite[Sec.~3]{Fliess:1981},
in particular (bi)linear systems~\cite{Mohler:1970}.
While there exists a wealth of results on the mathematical aspects of Chen-Fliess series
(exploring topics such as
convergence and continuity~\cite{GrayWang:SCL:2002},
relative degrees~\cite{GrayDuffautEspinosaThistsa:Automatica:2014},
detection~\cite{Gray:CDC:2020},
identification \cite{GrayDuffaurEspinosaHaq:ACC:2022},
optimal control~\cite{EspinosaGrayAvellaneda:CDC:2023},
and many more; see~\cite{GrayFPS_2022} for an extensive treatment),
exact algorithms concerning their analyses are currently lacking.

In this work, we show that certain exact algorithmic analyses are possible.
We focus on the following properties:
\begin{enumerate}[label=(P\arabic*)]
    \item \emph{Equality}:
    \label{P1}
    Are the outputs of two systems equal, for all inputs?
    This is a fundamental problem,
    aiming at identifying systems behaving in the same way as input/output transformers,
    regardless of the syntactic dissimilarities of their presentations.
    As a special case, we also consider \emph{zeroness},
    where one asks whether the output of the system is zero, for all inputs.
    
    \item \label{P2}
    \emph{Input independence}\-
    (\aka~\emph{output invariance}~\cite[Sec.~3.3]{Isidori:NCS:1995}, \cite[Sec.~4.3]{NijmeijerVanDerSchaft:1990}):
    Is the output of the system independent of the concrete value of the input?
    This is a very natural problem,
    aiming at identifying the inputs whose value do not affect the output,
    thus enabling simplifications.
    The problem is nontrivial,
    since the input may affect the state of the system
    and thus output invariance may be the effect of complicated cancellations.
    It is an essential preliminary step to
    disturbance and input-output decoupling~\cite[Ch.~7, 8, 9]{NijmeijerVanDerSchaft:1990}.

    \item \emph{Linearity}:
    \label{P3}
    Is the output of the system a linear function of the input?
    Linearity is a cornerstone concept in control theory.
    Linear systems are easier to analyse and control,
    and the cascade composition of a system followed by a linear one usually results in a system in the same class.
    This problem is nontrivial because the state may still exhibit nonlinear behaviour.

    \item \emph{Analyticity}:
    \label{P4}
    Is the output of the system an analytic function of (the integral of) the input?
    When a system is analytic one can replace complex pathways
    by a direct connection to the output (via an integrator),
    which greatly simplifies the design.
\end{enumerate}

\begin{figure}
    \begin{tabular}{lA}
        \multicolumn{1}{c}{semantic property} & \multicolumn{2}{c}{combinatorial property} \\
        \hline
        \cref{P1}~equality
            & g_1 &= g_2 \\
        \cref{P2}~independence of inputs $J$
            & \support g &\subseteq \Sigma_{\setminus J}^* \\
        \cref{P3}~linearity \wrt~inputs $J$
            & \support g &\subseteq \Sigma_{\setminus J}^* \cdot \Sigma_J \cdot \Sigma_{\setminus J}^* \\
        \cref{P4}~analyticity \wrt~inputs $J$
            & \multicolumn{2}{c}{$g$ is commutative in $\Sigma_J$}
    \end{tabular}
    \caption{Combinatorial characterisations of semantic properties.}
    \label{fig:combinatorial characterisations}
\end{figure}
\noindent
These properties have not been studied before from an algorithmic perspective.
We chose a purely formal presentation
(inspired by the coinductive approach to calculus~\cite{EscardoPavlovic:LICS:1998})
and consider the semantics of a control system as a formal functional $\D^m \to \D$,
where $\D := \powerseries \R t$ is the algebra of formal power series with real coefficients.
In order to develop algorithms, we focus on a class of functionals admitting a finite representation:
We consider systems whose dynamics is polynomial in the state $x \in \D^k$ and affine in the input $u = \tuple{u_1, \dots, u_m} \in \D^m$,
which we call \emph{polynomial systems}:
\begin{align}
    \label{eq:polynomial system}
            x'
                = \sum_{j = 0}^m u_j \cdot p_j(x),
                \quad y = q(x),
                \quad x(0) := x_0 \in \R^k,
\end{align}
where $p_0, \dots, p_m \in \poly \R k^k, q \in \poly \R k$ are ($k$-tuples of) $k$-variate polynomials,
$y \in \D$ is the output, and $u_0 := 1$ by convention.
%
Fliess has shown that one can associate to such a system a \emph{generating series} $g : \Sigma^* \to \R$,
in such a way that the semantic properties \cref{P1,P2,P3,P4} can be characterised by
corresponding combinatorial properties on $g$~\cite{Fliess:MST:1976,Fliess:1981}.
We summarise Fliess' characterisations in~\cref{fig:combinatorial characterisations}
(the notations will formally be defined later).

\subsection{Contributions}

We develop algorithms for the combinatorial properties from~\cref{fig:combinatorial characterisations},
thus deciding the corresponding semantic properties of polynomial systems.
Our contributions follow.
\begin{enumerate}[label=(C\arabic*)]
    \item  \label{item:equality}
    We observe that the generating series of polynomial systems coincide with the class of \emph{shuffle-finite series},
    a generalisation of Schützenberger's rational series
    which we have recently studied from an algorithmic perspective~\cite{Clemente:CONCUR:2024}.
    More precisely, zeroness and equality are decidable for shuffle-finite series
    (with elementary computational complexity)~\cite[Theorem~3]{Clemente:CONCUR:2024},
    which immediately solves problem~\cref{P1} for polynomial systems.

    \item \label{item:commutativity}
    In~\cite[Theorem 7]{Clemente:arXiv:LICS:2025} we show that commutativity is decidable for shuffle-finite series,
    and here we observe that the algorithm generalises to decide commutativity \wrt~a subset of the alphabet,
    thus solving \cref{P4}.
    
    \item \label{item:inclusion queries}
    We show that \emph{support inclusion queries} of the form $\support g \subseteq L$
    where $L \subseteq \Sigma^*$ is a \emph{commutative regular language}
    are decidable for shuffle-finite series.
    Since languages of the form $\Sigma_{\setminus J}^*$ and $\Sigma_{\setminus J}^* \cdot \Sigma_J \cdot \Sigma_{\setminus J}^*$
    are commutative regular, this solves~\cref{P2,P3}.
\end{enumerate}
We consider multi-input single-output systems (MISO);
the framework extends with no difficulty to multiple outputs (MIMO).
The formal approach allows us to avoid convergence considerations
and obtain a clean and concise presentation.
Since polynomial systems are analytic,
such considerations can be added with no mathematical difficulty.
Summarising, we obtain the following result.
\begin{theorem}
    The problems~\cref{P1,P2,P3,P4} are decidable for polynomial systems.
\end{theorem}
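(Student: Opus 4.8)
The plan is to assemble the three contributions \cref{item:equality,item:commutativity,item:inclusion queries} through Fliess' dictionary between semantic and combinatorial properties (\cref{fig:combinatorial characterisations}). First I would, given a polynomial system \eqref{eq:polynomial system}, construct its generating series $g \colon \Sigma^* \to \R$ along the lines of Chen--Fliess and record the observation from \cref{item:equality} that $g$ is \emph{shuffle-finite}: the iterated input-derivatives of the output $y = q(x)$ span, through the polynomial dynamics $x' = \sum_j u_j p_j(x)$, a finitely generated shuffle algebra over the polynomial state functions, so $g$ admits a finite shuffle-representation. This reduces every semantic question about the system to a combinatorial question about a shuffle-finite series.

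Next I would dispatch the four properties one by one using the right-hand column of \cref{fig:combinatorial characterisations}. For \cref{P1} (equality) I compute the two generating series $g_1, g_2$, note that shuffle-finite series form an effective algebra closed under difference, and decide $g_1 - g_2 = 0$ by the zeroness procedure of \cref{item:equality}; zeroness itself is the special case $g_2 = 0$. For \cref{P4} (analyticity) the characterisation is commutativity of $g$ in $\Sigma_J$, which is decided directly by the generalised commutativity algorithm of \cref{item:commutativity}.

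The remaining two properties, \cref{P2} (input independence) and \cref{P3} (linearity), both take the shape of a support-inclusion query $\support g \subseteq L$. The key step here is to verify that the two languages in question are \emph{commutative regular}: $\Sigma_{\setminus J}^*$ is the set of words whose $\Sigma_J$-projection is empty, and $\Sigma_{\setminus J}^* \cdot \Sigma_J \cdot \Sigma_{\setminus J}^*$ is the set of words whose $\Sigma_J$-projection has length exactly one. Both conditions depend only on the Parikh image, hence both languages are closed under letter permutation and are plainly regular; with this observation in place, \cref{P2,P3} fall to the support-inclusion procedure of \cref{item:inclusion queries}.

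I expect the genuine mathematical obstacle to sit entirely inside \cref{item:inclusion queries}, rather than in the assembly above. Deciding $\support g \subseteq L$ is equivalent to testing emptiness of $\support g \cap (\Sigma^* \setminus L)$, and since the complement of a commutative regular language is again commutative regular, the crux is to show that the support-restriction of $g$ to $\Sigma^* \setminus L$ — obtained by zeroing all coefficients on words outside the forbidden set — is still shuffle-finite, so that the query reduces to a zeroness test. This closure under support-restriction with commutative regular languages is exactly the technical contribution flagged in the abstract: ordinary Hadamard product with a recognisable series need not preserve shuffle-finiteness, and it is the commutativity of $L$ (equivalently, the fact that membership depends only on the Parikh image, which interacts well with the shuffle product) that keeps the restriction inside the class and the whole argument effective.
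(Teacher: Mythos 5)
Your proposal is correct and follows essentially the same route as the paper: the coincidence of generating series of polynomial systems with shuffle-finite series, Fliess' dictionary reducing \cref{P1} to zeroness/equality, \cref{P4} to commutativity in $\Sigma_J$, and \cref{P2,P3} to support-inclusion queries for the commutative regular languages $\Sigma_{\setminus J}^*$ and $\Sigma_{\setminus J}^* \cdot \Sigma_J \cdot \Sigma_{\setminus J}^*$, with the latter handled by complementation plus closure of shuffle-finite series under commutative regular support restrictions and a final zeroness test. You also correctly locate the genuine technical content in that closure property (and correctly attribute its feasibility to the commutativity of the language, which is what makes restriction interact well with the shuffle product), exactly as the paper does.
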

We establish both~\cref{item:commutativity} and~\cref{item:inclusion queries} by reduction to the zeroness problem.
To achieve this for~\cref{item:inclusion queries},
we exploit the following closure property,
which constitutes the main technical contribution of the paper.
%
\begin{restatable}{theorem}{thmClosureUnderCommutativeRegularSupportRestrictions}
    \label{thm:closure under commutative regular support restriction}
    Shuffle-finite series are effectively closed
    under support restrictions by commutative regular languages.
\end{restatable}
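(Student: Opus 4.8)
The plan is to exploit the fact that a language is commutative regular exactly when it is recognised by a finite \emph{commutative} monoid, and to show that the associated monoid grading of a series is compatible with the shuffle product. Concretely, let $\phi \colon \Sigma^* \to M$ be the syntactic morphism of $L$, where $M$ is its syntactic monoid; $M$ is finite and commutative precisely because $L$ is commutative regular, and $L = \phi^{-1}(F)$ for the accepting set $F \subseteq M$. For a series $h$ and $m \in M$ write $h^{(m)} := \restrict{h}{\phi^{-1}(m)}$ for its restriction to the words of weight $m$, so that $h = \sum_{m \in M} h^{(m)}$ is a finite decomposition and the support restriction we are after is $\restrict{g}{L} = \sum_{m \in F} g^{(m)}$. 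Assume $g$ is presented by a finite family $g_1 = g, g_2, \dots, g_n$ closed under derivatives, in the sense that each $\delta_a g_i$ equals a shuffle-polynomial $P_{i,a}(g_1, \dots, g_n)$. First I would show that the finitely many graded pieces $\set{ g_i^{(m)} : 1 \le i \le n,\ m \in M }$ again form such a family; closure and effectiveness then follow.

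The key step, and the main obstacle, is to control how the grading interacts with the shuffle product. Here commutativity of $M$ is essential: if $w$ is any interleaving of $u$ and $v$, then $w$ and the concatenation $uv$ have the same multiset of letters, so $\phi(w) = \phi(u)\phi(v)$ independently of the chosen interleaving. Summing the defining expansion of the shuffle according to the weights of the two factors then yields the convolution identity
\[
    (h \shuffle h')^{(m)} = \sum_{m_1 m_2 = m} h^{(m_1)} \shuffle (h')^{(m_2)}.
\]
Combined with linearity of the grading, this shows that for every shuffle-polynomial $P$ the graded piece $\tuple{P(g_1, \dots, g_n)}^{(m)}$ is again a shuffle-polynomial in the pieces $g_j^{(m')}$, obtained by distributing over monomials and applying the identity factor by factor. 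This is exactly the place where a non-commutative $L$ would break the argument: distinct interleavings of $u$ and $v$ could land in different classes, so the grading would fail to be a morphism of shuffle algebras.

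With the convolution identity available the rest is routine bookkeeping. Differentiating a graded piece gives $\delta_a\tuple{g_i^{(m)}} = \sum_{m' \,:\, \phi(a)\, m' = m} \tuple{\delta_a g_i}^{(m')}$, using $\phi(aw) = \phi(a)\phi(w)$; substituting $\delta_a g_i = P_{i,a}(g_1,\dots,g_n)$ and invoking the previous paragraph rewrites each $\delta_a\tuple{g_i^{(m)}}$ as a shuffle-polynomial in the $g_j^{(m')}$. Hence $\set{ g_i^{(m)} }$ is a finite family closed under derivatives, and the desired series $\restrict{g}{L} = \sum_{m \in F} g^{(m)}$ is a linear combination of its members, hence a shuffle-polynomial in them; this makes $\restrict{g}{L}$ shuffle-finite. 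Effectiveness holds throughout, since $M$, $\phi$, $F$ and the finite convolution and derivative tables are all computable from presentations of $L$ and of $g$, so a presentation of $\restrict{g}{L}$ can be produced explicitly.
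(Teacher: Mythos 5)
Your proposal is correct and follows essentially the same route as the paper's own proof: recognition of $L$ by a finite commutative monoid, decomposition of each series into its restrictions $g^{(m)}$ over monoid elements, the convolution identity $(h \shuffle h')^{(m)} = \sum_{m_1 m_2 = m} h^{(m_1)} \shuffle (h')^{(m_2)}$ (justified exactly as in the paper, by the fact that all interleavings of $u$ and $v$ map to $\phi(u)\phi(v)$), the derivative formula, and the conclusion that the graded pieces of the generators form a finite derivative-closed family. The only cosmetic difference is that you work with the syntactic monoid while the paper takes an arbitrary commutative recognising monoid via Eilenberg's variety theorem; this changes nothing in the argument.
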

\noindent
By ``effective'' we mean that there is an algorithm which,
given in input a finite description of a shuffle-finite series $g$ and a commutative regular language $L$,
produces in output a finite description of the restriction of $g$ to $L$.
This is a novel effective closure property for shuffle-finite series,
which is a result of independent interest.


\subsection{Related works}

Shuffle-finite series have \emph{finite Lie rank}~\cite{Fliess:IM:1983,Reutenauer:Lie:1986},
however the converse is not true in general (by a simple cardinality argument,
since series of finite Lie rank are not even finitely presented).
While series of finite Lie rank admit an elegant minimisation result~\cite[Theorem~I.1]{Fliess:IM:1983},
it is not clear whether shuffle-finite series admit an analogous characterisation.
%

The coincidence of generating series of polynomial systems
and those recognised by a subclass of weighted Petri nets
(called \emph{basic parallel processes} in the Petri net literature~\cite{Esparza:FI:1997})
has been noted in~\cite{FoursovHespel:INRIA:2006,FoursovHespel:MTNS:2006} (see also~\cite{GrayHerencia-ZapanaDuffautEspinosaGonzalez:SCL:2009}),
however this connection has not been exploited from an algorithmic perspective.
%

The zeroness problem is related to the \emph{zero dynamics} problem~\cite{Isidori:EJC:2013}
which aims at constructing inputs that cause the system's output to be zero,
studied in~\cite{GrayEbrahimi-FardSchmeding:CISS:2021} for Chen-Fliess series.
Zeroness is much stronger since it requires the output to be zero \emph{for all inputs}.


Beyond the properties~\cref{P1,P2,P3,P4} considered here,
other symmetries have been considered for Fliess operators~\cite{GrayVerriest:arXiv:2025}.
For example, \emph{reversible series} (invariance under reversal of the input word)
give rise to functionals which are invariant under time reversal~\cite[Theorem 3.1]{GrayVerriest:arXiv:2025},
however 1) equivalence of the two properties is left open,
and 2) it is not clear whether reversibility for shuffle-finite series is decidable.

Full proofs can be found in the
appendix.
\section{Preliminaries}

\subsection{Formal series}

An \emph{alphabet} is a finite set of symbols $\Sigma$,
which we call \emph{letters}.
A \emph{word} $w = a_1 \cdots a_n$ is a finite sequence of letters $a_1, \dots, a_n$ from the alphabet;
its \emph{length} is $\length {} w := n$.
The \emph{empty word} is denoted by $\e$ and has length $\length {} \e = 0$.
The set of all words of finite length over $\Sigma$ is denoted by $\Sigma^*$;
together with the empty word it forms a noncommutative monoid under concatenation.
A mapping $f : \Sigma^* \to \R$ is called a \emph{series}.
We denote the set of series by $\series \R \Sigma$.
The \emph{coefficient} of a word $w \in \Sigma^*$ in a series $f \in \series \R \Sigma$ is denoted by $\coefficient w f := f(w)$.
%
The \emph{support} of a series $f \in \series \R \Sigma$, denoted by $\support f$,
is the set of words $w \in \Sigma^*$ such that $\coefficient w f \neq 0$.

The set of series is equipped with several operations.
It carries the structure of a vector space (over $\R$),
with zero $\zero$, scalar multiplication $c \cdot f$ with $c \in \R$, and addition $f + g$ defined element-wise by
$\coefficient w \zero := 0$,
$\coefficient w {(c \cdot f)} := c \cdot \coefficient w f$, and
$\coefficient w {(f + g)} := \coefficient w f + \coefficient w g$,
for every $w \in \Sigma^*$.
This vector space is equipped with two important families of linear maps, called
\emph{left} and \emph{right derivatives}
$\deriveleft a, \deriveright a : \series \R \Sigma \to \series \R \Sigma$, for every $a \in \Sigma$.
They are the series analogues of language quotients:
For every series $f$ 
they are defined by $\coefficient w {\deriveleft a f} := \coefficient {a \cdot w} f$,
resp., $\coefficient w {\deriveright a f} := \coefficient {w \cdot a} f$,
for every $w \in \Sigma^*$.

The \emph{order} of a series $f$ is $\infty$ if $f = 0$,
and otherwise is the least $\length {} w \in \N$ \st~$\coefficient w f \neq 0$.
A family of series $\setof {f_i} {i \in I}$ is \emph{summable} if for every $n \in \N$
there are only finitely many series $f_i$'s of order $\leq n$,
and in this case the series $\sum_{i \in I} f_i$ is well-defined.
Since the family $\setof {f_w \cdot w} {w \in \Sigma^*}$ is summable,
any series $f$ can be written as the possibly infinite sum $f = \sum_{w \in \Sigma^*} f_w \cdot w$.

Finally, we equip the vector space of series with the \emph{shuffle product} operation ``$\shuffle$'',
which turns it into an algebra (all algebras considered in the paper are over $\R$).
Shuffle can be defined first on words,
and then extended to series by continuity (in a suitable topology)~\cite[Sec.~6.3]{Lothaire:CUP:1997}.
We will use a coinductive definition~\cite[Definition 8.1]{BasoldHansenPinRutten:MSCS:2017}.
The \emph{shuffle product} $f \shuffle g$ is the unique series \st
\begin{align}
    \tag{${\shuffle}$-$\e$}
    \label{eq:shuffle:base}
    \coefficient \e {(f \shuffle g)}
        &= f_\e \cdot g_\e, \\
    \tag{${\shuffle}$-$\deriveleft a$}
    \label{eq:shuffle:step}
        \deriveleft a (f \shuffle g)
        &= \deriveleft a f \shuffle g + f \shuffle \deriveleft a g,
        \quad \forall a \in \Sigma.
\end{align}
This is an associative and commutative operation,
with identity the series $1 \cdot \e$.
For instance, $ab \shuffle a = 2 \cdot aab + aba$.
It originates in the work of Eilenberg and MacLane in homological algebra~\cite{EilenbergMacLane:AM:1953},
and was introduced in automata theory by Fliess under the name of \emph{Hurwitz product}~\cite{Fliess:1974}.
It is the series analogue of the shuffle product in language theory,
and it finds applications in concurrency theory, where it models the interleaving semantics of process composition~\cite{Clemente:CONCUR:2024}.

We refer to~\cite{BerstelReutenauer:CUP:2010} for an extensive introduction to formal series,
and to~\cite{CoxLittleOShea:Ideals:2015} for basic notions from algebraic geometry. 

\subsection{Formal functionals}

We consider inputs from $\D := \powerseries \R t$, the set of univariate power series,
which we write in exponential notation as $u = \sum_{n=0}^\infty u_n \cdot \frac {t^n} {n!} \in \D$.
It is an algebra under the usual operations of scalar product, addition, and multiplication.
We denote by $\coefficient {t^n} u$ the coefficient $u_n$.
The notion of order and summability are defined as for series.
The \emph{formal derivative} and the \emph{formal integral} of $u \in \D$ are
\begin{align}
    u' := \sum_{n=0}^\infty u_{n+1} \cdot \frac {t^n} {n!} 
        \text{,\qquad resp.,\quad}
            \int u := \sum_{n=1}^\infty u_{n-1} \cdot \frac {t^n} {n!}. 
\end{align}
Integration increases the order by one.
We have the fundamental relation $(\int v)' = v$
and product rule $(u \cdot v)' = u' \cdot v + u \cdot v'$.
A \emph{formal functional} with $m$ inputs is a mapping $F : \D^m \to \D$.
It is \emph{causal} if $\coefficient {t^n} {(F(u_1, \dots, u_m))}$ depends only on $\coefficient {t^i} {u_j}$ for $0 \leq i < n$ and $1 \leq j \leq m$.
Functionals $F, G$ can be multiplied by scalars $c \cdot F$ ($c \in \R$),
added $F + G$, and multiplied $F \cdot G$,
all operations being defined pointwise,
giving rise to an algebra of functionals.
In the next section, we describe a finite syntax for a class of causal functionals.

\subsection{Polynomial control systems}

For a natural number $k \in \N$,
let $\poly \R k$ denote the algebra of $k$-variate polynomials.
A \emph{polynomial system} is a tuple $\SS = \tuple{m, k, x_0, p_0, \dots, p_m, q}$
where $m, k \in \N$ are the number of inputs, resp., the dimension of the state space,
$x_0 \in \R^k$ is the \emph{initial state},
$p_0, \dots, p_m \in \poly \R k^k$ are tuples of polynomials representing the \emph{state dynamics},
and $q \in \poly \R k$ is a polynomial representing the \emph{output}.
When speaking of a polynomial system in an algorithmic context,
we assume that all data is over the rational numbers $\Q$.
The \emph{semantics} of a polynomial system is the functional $\sem \SS : \D^m \to \D$ which is defined as follows.
Consider a tuple of inputs $u = \tuple{u_1, \dots, u_m} \in \D^m$ (with the convention $u_0 = 1$)
and the system of power series ordinary differential equations~\cref{eq:polynomial system}.
%
%
By the Picard-Lindelöf theorem, it admits a unique solution $x \in \D^k, y \in \D$.
The output to the polynomial system is $\sem \SS(u) := y \in \D$.
The semantics of a polynomial system 
is in fact an analytic functional, in the sense of the next section.

\subsection{Analytic functionals}

Fix an alphabet $\Sigma = \set{a_0, \dots, a_m}$.
The \emph{formal iterated integral} (on $m$ inputs) is the mapping $F : \Sigma^* \to (\D^m \to \D)$
that maps $w \in \Sigma^*$ to the causal functional $F_w : \D^m \to \D$ defined by induction on $w$ as follows.
For every tuple of inputs $u = \tuple{u_1, \dots, u_m} \in \D^m$ (by convention we set $u_0 := 1$),
input symbol $a_j \in \Sigma$, and word $w \in \Sigma^*$,
%
%
\begin{align}
    \label{eq:iterated integral}
    F_\e(u) := 1
        \quad\text{and}\quad
            F_{a_j \cdot w}(u) := \int (u_j \cdot F_w(u)).
\end{align}
Note that the order of $F_w(u)$ is $\geq \length {} w$,
and thus for every $u$ the family of power series $\setof {F_w(u)} {w \in \Sigma^*}$ is summable.
Consequently, we can extend $F$ to the \emph{formal Fliess operator} $F : \series \R \Sigma \to (\D^m \to \D)$
by defining, for every $g \in \series \R \Sigma$, 
\begin{align}
    \label{eq:causal analytic functional}
    \nonumber
    &F_g : \D^m \to \D \\
    &F_g(u) 
        := \sum_{w \in \Sigma^*} g_w \cdot F_w(u) \in \D, \quad \forall u \in \D^m.
\end{align}
(An equivalent treatment based on a notion of composition of series can be found in~\cite{GrayWang:MTNS:2008}).
Causal functionals of the form $F_g$ are called \emph{analytic}.
Notice that $\coefficient {t^0} {(F_g(u))} = g_\e$.
The following classic lemma explains the basic (and beautiful) properties of the Fliess operator.
\begin{restatable}[\protect{\cite{Fliess:1981}}]{lemma}{lemFliessHomomorphism}
    \label{lem:Fliess homomorphism}
    The Fliess operator is a homomorphism from the shuffle algebra of series
    to the algebra of causal functionals.
    In other words, for every series $f, g \in \series \R \Sigma$,
    \begin{align}
        \label{eq:scalar}
        F_{c \cdot f}
            &= c \cdot F_f,
                && \forall c \in \R \\
        \label{eq:sum}
        F_{f + g}
            &= F_f + F_g, \\
        \label{eq:product}
        F_{f \shuffle g}
            &= F_f \cdot F_g.
    \end{align}
\end{restatable}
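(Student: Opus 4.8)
The plan is to prove the three identities in order of increasing difficulty. The scalar and additive identities~\eqref{eq:scalar} and~\eqref{eq:sum} are immediate from the definition~\eqref{eq:causal analytic functional}: since $F_g(u) = \sum_w g_w F_w(u)$ depends linearly on the coefficient sequence $w \mapsto g_w$, and the family $\setof{F_w(u)}{w \in \Sigma^*}$ is summable for every fixed $u$, one may rescale and split the defining sum termwise. The real content is the multiplicative identity~\eqref{eq:product}, $F_{f \shuffle g} = F_f \cdot F_g$, and I would reduce it to the case of words as follows. By bilinearity of the shuffle product and summability, $f \shuffle g = \sum_{v, w} f_v \, g_w \, (v \shuffle w)$, so by the already-established linearity it suffices to show $F_{v \shuffle w} = F_v \cdot F_w$ for all pairs of words $v, w \in \Sigma^*$; the passage back to arbitrary series $f, g$ is then a second application of bilinearity and summability.

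For the word case I would first record the auxiliary identity $F_{a_j \cdot h}(u) = \int (u_j \cdot F_h(u))$, valid for every series $h$ and letter $a_j$, which follows by applying the defining recursion~\eqref{eq:iterated integral} termwise and pulling $\int$ and the factor $u_j$ out of the summable sum. I then argue by induction on $\length{}{v} + \length{}{w}$. The base case where $v = \e$ or $w = \e$ uses $F_\e(u) = 1$ together with $\e \shuffle w = w$. For the inductive step, write $v = a_i \cdot v'$ and $w = a_j \cdot w'$ and use the word-level shuffle recursion $v \shuffle w = a_i \cdot (v' \shuffle w) + a_j \cdot (v \shuffle w')$, a direct consequence of the coinductive rule~\eqref{eq:shuffle:step}. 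Applying the auxiliary identity and the induction hypothesis yields $F_{v \shuffle w}(u) = \int(u_i \, F_{v'}(u) F_w(u)) + \int(u_j \, F_v(u) F_{w'}(u))$. To identify this with $F_v(u) \cdot F_w(u)$, I differentiate the latter: the fundamental relation $(\int h)' = h$ gives $F_v(u)' = u_i F_{v'}(u)$ and $F_w(u)' = u_j F_{w'}(u)$, so the product rule produces exactly the integrand above. Integrating back then closes the induction, provided the constants of integration agree.

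The main obstacle is precisely this bookkeeping of constants together with the repeated interchange of infinite sums with integration and multiplication. The constant-term matching is handled by the order estimate: since $v$ and $w$ both have length $\geq 1$ in the inductive step, $F_v(u)$ and $F_w(u)$ have order $\geq 1$, so their product has vanishing constant term, which coincides with the zero constant term produced by $\int$; hence no spurious constant appears and $F_v(u) F_w(u) = F_{v \shuffle w}(u)$. The summability interchanges are justified uniformly by the estimate that $F_z(u)$ has order $\geq \length{}{z}$: in particular every word occurring in $v \shuffle w$ has length $\length{}{v} + \length{}{w}$, so the double family $\setof{f_v \, g_w \, F_{v \shuffle w}(u)}{v, w}$ is summable and may be freely reassociated into the product of the two single sums defining $F_f(u)$ and $F_g(u)$, completing the proof of~\eqref{eq:product}.
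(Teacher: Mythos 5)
Your proof is correct, but it takes a genuinely different route from the paper's. The paper never reduces to words: it first proves a first-step decomposition formula $(F_f(u))' = \sum_{j=0}^m u_j \cdot F_{\deriveleft{a_j} f}(u)$ for an \emph{arbitrary} series $f$, and then establishes \cref{eq:product} by coinduction on the formal time variable: two power series in $\D$ are equal iff they have the same constant term and the same derivative, and the derivative of $F_{f \shuffle g}(u)$ is computed via the coinductive shuffle rule \cref{eq:shuffle:step}, invoking the product identity for the pairs $(\deriveleft{a_j} f,\, g)$ and $(f,\, \deriveleft{a_j} g)$ as the coinductive hypothesis (legitimate, since extracting the coefficient of $t^n$ requires only finitely many unfoldings). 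You instead decompose $f \shuffle g = \sum_{v,w} f_v g_w \,(v \shuffle w)$, prove $F_{v \shuffle w} = F_v \cdot F_w$ for words by induction on $\length{}{v} + \length{}{w}$, and reassemble; note that the reassembly needs \emph{infinitary} linearity of the Fliess operator, not just \cref{eq:scalar,eq:sum}, which is exactly what your order estimate $\ord{F_z(u)} \geq \length{}{z}$ supplies, so your final paragraph is doing real work rather than routine bookkeeping. Your inner step (equal derivatives plus equal constant terms implies equality) is the same differential-algebra principle as the paper's, but deployed inside a well-founded induction, so no coinduction-up-to machinery is needed; the price is precisely those interchange-of-summation arguments, which the paper's formulation avoids entirely by staying at the level of whole series. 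The two proofs share the same computational core: your identity $(F_v(u))' = u_i \cdot F_{v'}(u)$ for $v = a_i \cdot v'$ is the word-level instance of the paper's decomposition formula, and your observation that both sides of the inductive step have vanishing constant term mirrors the paper's constant-term check $f_\e \cdot g_\e$. In short: yours is more elementary and self-contained, the paper's is shorter and cleaner once the coinduction principle for $\powerseries{\R}{t}$ is taken for granted.
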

\noindent
In the next section we tie the knot
by recalling that the semantics of a polynomial system $\SS$ is an analytic functional $\sem \SS = F_g$
for a series $g$ belonging to a well-behaved class.
\section{Shuffle automata and shuffle-finite series}

We recall an automaton-like model recognising series,
which we call shuffle automata~\cite{Clemente:arXiv:LICS:2025}.
%
%
They have previously appeared in~\cite{Clemente:CONCUR:2024} under the name~\emph{weighted basic parallel processes},
highlighting the connection to Petri nets
(the same observation has appeared in~\cite{FoursovHespel:MTNS:2006,FoursovHespel:INRIA:2006}).
They also arise by specialising \emph{differential representations}~\cite{Fliess:IM:1983,Reutenauer:Lie:1986} from formal to polynomial vector fields.

A \emph{shuffle automaton} is a tuple $A = \tuple{\Sigma, X, \alpha_I, O, \Delta}$
where $\Sigma$ is a finite alphabet,
$X = \set{X_1, \dots, X_k}$ is a finite set of \emph{nonterminals},
$\alpha_I : \poly \R X$ is the \emph{initial configuration},
$O : X \to \R$ is the \emph{output function},
and $\Delta : \Sigma \to X \to \poly \R X$ is the \emph{transition function}.
The transition $\Delta_a : X \to \poly \R X$ (for $a \in \Sigma$)
is extended to a unique \emph{derivation} $\Delta_a : \poly \R X \to \poly \R X$
of the polynomial algebra of configurations:
It is the unique linear map \st, for every configurations $\alpha, \beta \in \poly \R X$,
$\Delta_a(\alpha \cdot \beta) = \Delta_a \alpha \cdot \beta + \alpha \cdot \Delta_a \beta$.
The fact that such an extension exists and is unique is a basic fact from differential algebra, \cf~\cite[page 10, point 4]{Kaplansky:DA:1957}.
Transitions from single letters
are extended to all finite input words homomorphically:
For every configuration $\alpha \in \poly \R X$,
input word $w \in \Sigma^*$, and letter $a \in \Sigma$,
we have $\Delta_\e \alpha := \alpha$
and $\Delta_{a \cdot w} \alpha := \Delta_w \Delta_a \alpha$.
In other words, $\Delta_w$ is the \emph{iterated Lie derivative} generated by $\tuple{\Delta_a}_{a \in \Sigma}$.
The \emph{semantics} of a configuration $\alpha \in \poly \R X$
is the series $\sem \alpha \in \series \R \Sigma$ \st
\begin{align}
    \label{eq:semantics}
    \sem \alpha_w &:= O \Delta_w \alpha := (\Delta_w \alpha)(O X_1, \dots, O X_k),
    \ \forall w \in \Sigma^*.
\end{align}
%
%
The series recognised by the shuffle automaton $A$ is $\sem {\alpha_I}$.
%

\begin{restatable}
    [Properties of the semantics~\protect{\cite[Lemma 8 + Lemma 9]{Clemente:CONCUR:2024}}]
    {lemma}{lemPropertiesofSemanticsOfShuffleAutomata}
    \label{lem:properties of semantics - shuffle automata}
    The semantics of a shuffle automaton is a homomorphism from
    the differential algebra of polynomials
    to the differential shuffle algebra of series.
    %
    In other words, $\sem 0 = \zero$, $\sem 1 = 1 \cdot \e$,
    and, for all configurations $\alpha, \beta \in \poly \R X$,
    \begin{align}
        \label{eq:shuffle:sem:scalar-product}
        \sem {c \cdot \alpha}
            &= c \cdot \sem \alpha 
            && \forall c \in \R, \\
        \label{eq:shuffle:sem:sum}
            \sem {\alpha + \beta}
            &= \sem \alpha + \sem \beta, \\
        \label{eq:shuffle:sem:product}
            \sem {\alpha \cdot \beta}
            &= \sem \alpha \shuffle \sem \beta, \\
        \label{eq:shuffle:sem:derivation}
        \sem {\Delta_a \alpha}
            &= \deriveleft a \sem \alpha
            && \forall a \in \Sigma.
    \end{align}
\end{restatable}

Shuffle automata are finite data structures representing series,
and are thus suitable as inputs to algorithms.
The same class of series admits a semantic presentation,
which we find more convenient in proofs, and which we now recall.
For \emph{generators} $f^{(1)}, \dots, f^{(k)} \in \series \R \Sigma$
let $\poly \R {f^{(1)}, \dots, f^{(k)}}_{\shuffle}$ be the smallest shuffle algebra of series containing the generators.
Algebras of this form are called \emph{finitely generated}.
A series $f \in \series \R \Sigma$ is \emph{shuffle finite}
if it belongs to a finitely generated shuffle algebra closed under left derivatives.
The following equivalent characterisation will constitute our working definition.
\begin{lemma}[\protect{\cf~\cite[Theorem 12]{Clemente:CONCUR:2024}}]
    \label{lem:working def of shuffle finite}
    A series $f \in \series \R \Sigma$ is shuffle finite
    iff there are series $f^{(1)}, \dots, f^{(k)} \in \series \R \Sigma$ \st\-
    \begin{enumerate}
        \item $f \in \poly \R {f^{(1)}, \dots, f^{(k)}}_{\shuffle}$, and
        \item $\deriveleft a f^{(i)} \in \poly \R {f^{(1)}, \dots, f^{(k)}}_{\shuffle}$
        for all $a \in \Sigma$ and $1 \leq i \leq k$.
    \end{enumerate}
\end{lemma}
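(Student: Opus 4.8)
The plan is to prove each implication by exhibiting the obvious candidate algebra and verifying closure under left derivatives. For the forward implication, suppose $f$ is shuffle finite, so by definition it lies in some finitely generated shuffle algebra $\mathcal A = \poly \R {g^{(1)}, \dots, g^{(n)}}_{\shuffle}$ that is closed under every left derivative $\deriveleft a$. I would simply take $f^{(i)} := g^{(i)}$ (and $k := n$): condition~(1) is then exactly the membership $f \in \mathcal A$ witnessing shuffle finiteness, while condition~(2) holds because $\deriveleft a g^{(i)} \in \mathcal A = \poly \R {g^{(1)}, \dots, g^{(n)}}_{\shuffle}$ by the assumed closure of $\mathcal A$. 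This direction requires no computation.

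For the converse, assume generators $f^{(1)}, \dots, f^{(k)}$ satisfying (1) and (2), and set $\mathcal A := \poly \R {f^{(1)}, \dots, f^{(k)}}_{\shuffle}$. By (1) we have $f \in \mathcal A$, and $\mathcal A$ is a finitely generated shuffle algebra by construction, so the entire content is to show that $\mathcal A$ is closed under each $\deriveleft a$; this exhibits $\mathcal A$ as the algebra required by the definition and hence proves $f$ shuffle finite. The crucial observation is that the coinductive rule~\cref{eq:shuffle:step} says precisely that each $\deriveleft a$ is a \emph{derivation} of the shuffle algebra: it is linear and satisfies the Leibniz identity $\deriveleft a(g \shuffle h) = \deriveleft a g \shuffle h + g \shuffle \deriveleft a h$.

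A generic element of $\mathcal A$ is an $\R$-linear combination of shuffle monomials $f^{(i_1)} \shuffle \cdots \shuffle f^{(i_r)}$ in the generators, so by linearity of $\deriveleft a$ it suffices to check that each such monomial is mapped into $\mathcal A$. Iterating the Leibniz identity gives
\begin{align*}
    \deriveleft a \left( f^{(i_1)} \shuffle \cdots \shuffle f^{(i_r)} \right)
        = \sum_{j=1}^{r} f^{(i_1)} \shuffle \cdots \shuffle \deriveleft a f^{(i_j)} \shuffle \cdots \shuffle f^{(i_r)},
\end{align*}
and every summand lies in $\mathcal A$: the factor $\deriveleft a f^{(i_j)}$ is in $\mathcal A$ by condition~(2), the remaining factors are generators, and $\mathcal A$ is closed under shuffle products and sums since it is a shuffle algebra. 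Hence $\deriveleft a \mathcal A \subseteq \mathcal A$ for every $a \in \Sigma$, as desired.

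I do not expect a genuine obstacle here: the only real content is the standard differential-algebra principle that a derivation which maps each generator of a subalgebra into that subalgebra maps the whole subalgebra into itself. The single point deserving care is the passage from the binary Leibniz rule to the $r$-ary expansion above, which is a routine induction on $r$ using the associativity and commutativity of $\shuffle$; I would state it explicitly but not belabour the calculation.
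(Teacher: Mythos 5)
Your proof is correct. There is, however, no internal proof in the paper to compare it against: the lemma is stated with a \emph{cf.}~citation to \cite[Theorem~12]{Clemente:CONCUR:2024}, where the equivalence is established in the setting of shuffle automata (weighted basic parallel processes), and the paper itself supplies no argument. Your proposal fills that gap with a self-contained, purely algebraic argument, and it is the natural one: the forward implication is immediate from the definition of shuffle finiteness (closure of the whole algebra under $\deriveleft a$ in particular applies to the generators), and the converse reduces to the standard differential-algebra principle that a derivation mapping each generator of a subalgebra into that subalgebra maps the entire subalgebra into itself. Both ingredients you invoke are available in the paper: linearity of $\deriveleft a$ and the Leibniz identity~\cref{eq:shuffle:step}, which together say exactly that $\deriveleft a$ is a derivation of the shuffle algebra of series. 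Two small points are worth stating explicitly rather than leaving implicit: (i) that the smallest shuffle algebra containing $f^{(1)}, \dots, f^{(k)}$ consists precisely of the $\R$-linear combinations of shuffle monomials in the generators, including the unit — this is what licenses your ``generic element'' step; and (ii) the degenerate monomial $r = 0$, for which $\deriveleft a (1 \cdot \e) = 0$ lies trivially in the algebra. Neither is an obstacle; with those lines added your argument is complete and can stand in place of the external citation.
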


We have the following coincidence result.
\begin{restatable}[\protect{\cite{Fliess:1981,Clemente:CONCUR:2024}}]{theorem}{thmCoincidence}
    \label{thm:coincidence}
    The semantics of every polynomial system is a causal analytic functional.
    Moreover, the following three classes of series coincide:
    \begin{enumerate}
        \item generating series of polynomial systems,
        \item series recognised by shuffle automata, and
        \item shuffle-finite series.
    \end{enumerate}
\end{restatable}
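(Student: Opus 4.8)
The plan is to prove the coincidence by establishing the two equivalences $(2) \Leftrightarrow (3)$ and $(1) \Leftrightarrow (2)$ separately, and to observe that the analyticity claim drops out of the $(1) \to (2)$ construction. The common thread is the dictionary identifying the nonterminals $X_1, \dots, X_k$ of a shuffle automaton with the state coordinates of a polynomial system, and each derivation $\Delta_{a_j}$ with the Lie derivative along the vector field $p_j$; under this dictionary all three presentations describe the same iterated-derivative data, and the task reduces to checking that the three semantic maps agree.

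For $(2) \Leftrightarrow (3)$, from a shuffle automaton $A = \tuple{\Sigma, X, \alpha_I, O, \Delta}$ I would take the generators $f^{(i)} := \sem{X_i}$. Since the semantics is an algebra homomorphism (\cref{lem:properties of semantics - shuffle automata}), every $\alpha \in \poly \R X$ satisfies $\sem \alpha = \alpha(f^{(1)}, \dots, f^{(k)})$ with polynomial products read as shuffle products; in particular $\sem{\alpha_I} \in \poly \R {f^{(1)}, \dots, f^{(k)}}_{\shuffle}$, and $\deriveleft a f^{(i)} = \sem{\Delta_a X_i}$ lies in the same algebra, so \cref{lem:working def of shuffle finite} yields shuffle finiteness. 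Conversely, given generators witnessing shuffle finiteness of $f$, I would let $\Delta_a X_i$ be a polynomial representing $\deriveleft a f^{(i)}$ in the generators, set $O X_i := \coefficient \e {f^{(i)}}$, and take $\alpha_I$ to represent $f$. The point needing care is that the resulting automaton recognises $f$: both $\alpha \mapsto \sem \alpha$ and $\alpha \mapsto \alpha(f^{(1)}, \dots, f^{(k)})$ are algebra homomorphisms that agree on the $\e$-coefficients of the generators and intertwine $\Delta_a$ with $\deriveleft a$ (the latter being a shuffle derivation by \eqref{eq:shuffle:step}), hence coincide by induction on word length, giving $\sem{X_i} = f^{(i)}$ and $\sem{\alpha_I} = f$.

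For $(1) \Leftrightarrow (2)$, from a polynomial system $\SS$ I would build the automaton with $\Delta_{a_j} X_i := (p_j)_i(X)$, $\alpha_I := q(X)$, and $O X_i := (x_0)_i$, putting $g := \sem{\alpha_I}$ (shuffle finite by the previous paragraph); the reverse direction reads $x_i' = \sum_j u_j \cdot \Delta_{a_j}(X_i)$, $q := \alpha_I$, and $x_0 := (O X_1, \dots, O X_k)$ off the automaton. The heart of the argument, and the step I expect to be the \emph{main obstacle}, is verifying $\sem \SS = F_g$, which simultaneously establishes analyticity. Setting $Y_i := F_{\sem{X_i}}(u)$, I would first extract from \eqref{eq:iterated integral} the derivative identity $(F_f(u))' = \sum_j u_j \cdot F_{\deriveleft{a_j} f}(u)$; applying it with $f = \sem{X_i}$, using $\deriveleft{a_j} \sem{X_i} = \sem{\Delta_{a_j} X_i}$ together with the homomorphism property of the Fliess operator (\cref{lem:Fliess homomorphism}, sending shuffle to pointwise product) gives $Y_i' = \sum_j u_j \cdot (p_j)_i(Y)$ with $Y_i(0) = \coefficient \e {\sem{X_i}} = (x_0)_i$. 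Thus $Y$ solves \eqref{eq:polynomial system}, so Picard-Lindelöf uniqueness forces $Y = x$ and hence $F_g(u) = q(Y) = q(x) = \sem \SS(u)$. The delicate points are the interchange of summation and integration justifying the derivative identity, and the careful alignment of the derivation and homomorphism structures across the three formalisms; once these hold, all inclusions close up.
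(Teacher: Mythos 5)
Your proposal is correct and follows essentially the same route as the paper: for the equivalence of (1) and (2) and for the analyticity claim, you build exactly the same dictionary ($\Delta_{a_j} X_i \leftrightarrow (p_j)_i$, $O \leftrightarrow x_0$, $\alpha_I \leftrightarrow q$), use the same derivative identity $(F_f(u))' = \sum_{j} u_j \cdot F_{\deriveleft{a_j} f}(u)$ together with the two homomorphism lemmas, and close the argument with Picard--Lindel\"of uniqueness, precisely as in the appendix proof. The only difference is that you also inline a proof of the equivalence of (2) and (3) via \cref{lem:working def of shuffle finite} (evaluation and automaton semantics are both shuffle-algebra homomorphisms intertwining $\Delta_a$ with $\deriveleft a$, hence agree by induction on word length), whereas the paper delegates this step to the cited result \cite[Theorem 12]{Clemente:CONCUR:2024}; your argument for it is sound and is the standard one.
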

\noindent
The first part of the theorem and equivalence of the first two points
follows from Fliess' fundamental formula~\cite[Theorem~III.2]{Fliess:1981},
which can be obtained from~\cref{eq:causal analytic functional,eq:semantics}.
%
%
Equivalence of the last two points is from~\cite[Theorem 12]{Clemente:CONCUR:2024}.

We recall some closure properties of the class of shuffle-finite series.
They are \emph{effective} in the sense that given shuffle automata recognising the input series
one can construct a shuffle automaton recognising the output series.
They all follow quite easily from~\cref{lem:properties of semantics - shuffle automata} (\cf~\cite[Lemma 10]{Clemente:CONCUR:2024}),
with the exception of closure under right derivative $\deriveright a f$ which is nontrivial~\cite[Lemma 12]{Clemente:arXiv:LICS:2025}.
\begin{lemma}[\protect{\cite{Clemente:CONCUR:2024,Clemente:arXiv:LICS:2025}}]
    \label{lem:shuffle-finite closure properties}
    If $f, g \in \series \R \Sigma$ are shuffle finite then so are
    $c \cdot f$ ($\forall c \in \R$), $f + g$, $f \shuffle g$, $\deriveleft a f$ and $\deriveright a f$ ($\forall a \in \Sigma$).
\end{lemma}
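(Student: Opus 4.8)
The plan is to work throughout with the characterisation of shuffle finiteness from \cref{lem:working def of shuffle finite}: a series is shuffle finite exactly when it lies in some finitely generated shuffle algebra that is closed under all left derivatives. The single structural fact driving every case is that each left derivative $\deriveleft a$ is a \emph{derivation} of the shuffle algebra, as recorded by the coinductive rule \cref{eq:shuffle:step}. Consequently, if every generator of a finitely generated shuffle algebra $V = \poly \R {f^{(1)}, \dots, f^{(k)}}_{\shuffle}$ has all its left derivatives inside $V$, then the Leibniz rule propagates this to every shuffle-polynomial in the generators, so $V$ is automatically closed under $\deriveleft a$. This reduces each closure claim to exhibiting a finite generating set whose generators' left derivatives stay inside the generated algebra (the same facts are available from the shuffle-automaton side via \cref{lem:properties of semantics - shuffle automata}).

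For scalar multiples, sums, and shuffle products I would simply merge the witnessing generator sets. If $f$ and $g$ are shuffle finite with generating algebras $V_f$ and $V_g$, I take the union of their generators and form $V = \poly \R {\dots}_{\shuffle}$. Being a shuffle algebra it contains $c \cdot f$, $f + g$, and $f \shuffle g$; being generated by series whose left derivatives lie in $V_f \cup V_g \subseteq V$, it is closed under left derivatives by the previous paragraph. Closure under the left derivative $\deriveleft a f$ is even more immediate: the witnessing algebra of $f$ already contains $\deriveleft a f$ and is left-derivative closed, so the same generators witness shuffle finiteness of $\deriveleft a f$.

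The genuinely delicate case is the right derivative, which I expect to be the main obstacle. Two auxiliary facts make it tractable. First, by commutativity of shuffle the right derivative obeys the symmetric Leibniz rule $\deriveright a (f \shuffle g) = \deriveright a f \shuffle g + f \shuffle \deriveright a g$, so $\deriveright a$ is likewise a derivation of the shuffle algebra (this is checked on words by tracking which argument contributes the last letter of an interleaving). Second, left and right derivatives commute, $\deriveleft b \deriveright a = \deriveright a \deriveleft b$, since both map the $w$-coefficient of $f$ to its $bwa$-coefficient. Given these, I would take a witnessing algebra $V = \poly \R {f^{(1)}, \dots, f^{(k)}}_{\shuffle}$ for $f$ and enlarge its generating set by adjoining all right derivatives $\deriveright a f^{(i)}$ (for $a \in \Sigma$ and $1 \leq i \leq k$), generating a larger shuffle algebra $W$. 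Since $\deriveright a$ is a derivation and $f \in V$ is a shuffle-polynomial in the $f^{(i)}$, differentiating term by term gives $\deriveright a f \in W$. It then remains to verify that $W$ is closed under left derivatives, and the only nonobvious generators are the $\deriveright a f^{(i)}$: for these, commutation gives $\deriveleft b \deriveright a f^{(i)} = \deriveright a \deriveleft b f^{(i)}$, and since $\deriveleft b f^{(i)} \in V$ is a shuffle-polynomial in the $f^{(j)}$, applying the derivation $\deriveright a$ yields a shuffle-polynomial in the $f^{(j)}$ and the $\deriveright a f^{(j)}$, hence an element of $W$. Thus $W$ is finitely generated and closed under left derivatives and contains $\deriveright a f$, so $\deriveright a f$ is shuffle finite.
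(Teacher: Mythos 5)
Your proposal is correct, and it is in fact more self-contained than the paper, which does not prove this lemma inline at all: the closures under $c \cdot f$, $f + g$, $f \shuffle g$, $\deriveleft a f$ are attributed to \cite[Lemma 10]{Clemente:CONCUR:2024} as easy consequences of \cref{lem:properties of semantics - shuffle automata}, while closure under $\deriveright a$ is imported as a black box from \cite[Lemma 12]{Clemente:arXiv:LICS:2025} and explicitly flagged as nontrivial. Your treatment of the easy cases by merging generator sets is the semantic counterpart of the cited automaton constructions, and your supporting observation is right: since $\deriveleft a$ is linear and a shuffle derivation by \cref{eq:shuffle:step}, left-derivative closure propagates from the generators to the whole finitely generated algebra, which is exactly why \cref{lem:working def of shuffle finite} is a legitimate working definition. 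Your right-derivative argument is also sound, because both auxiliary facts check out: $\deriveright a$ is a derivation of the shuffle algebra, since the last letter of any interleaving of two words comes from exactly one of them, giving $\deriveright a (u \shuffle v) = \deriveright a u \shuffle v + u \shuffle \deriveright a v$ on words, and the identity passes to series coefficientwise because each coefficient of a shuffle involves only finitely many terms; and $\deriveleft b \deriveright a f = \deriveright a \deriveleft b f$ since both sides have $w$-coefficient $\coefficient {b w a} f$. Given these, your algebra $W$ generated by the $f^{(i)}$ together with the $\deriveright a f^{(i)}$ contains $\deriveright a f$ by Leibniz, and your commutation computation shows $W$ is closed under left derivatives, so \cref{lem:working def of shuffle finite} applies; in effect you have reconstructed the substance of the cited nontrivial lemma rather than circumventing it. The one point your semantic formulation leaves implicit is effectiveness, which the surrounding text needs for \cref{thm:shuffle-finite zeroness}: your construction does effectivize — adjoin fresh nonterminals $Y_i$ denoting $\deriveright a f^{(i)}$, with output $\coefficient a {f^{(i)}}$ and transitions obtained by applying to $\Delta_b X_i$ the formal derivation sending $X_j \mapsto Y_j$ — but this step deserves to be stated explicitly if the lemma is to feed the algorithms of the paper.
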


\subsection{Commutative regular support restrictions}

In this section we show a novel closure property of shuffle-finite series.
The \emph{restriction} of a series $f \in \series \R \Sigma$ to a language $L \subseteq \Sigma^*$
is the series $\restrict f L \in \series \R \Sigma$
which agrees with $f$ on $L$, and is zero elsewhere.
Formally, for every $w \in \Sigma^*$,
\begin{align}
    \coefficient w {(\restrict f L)} :=
    \begin{cases}
        \coefficient w f & \text{if $w \in L$}, \\
        0 & \text{otherwise}.
    \end{cases}
\end{align}
A language $L \subseteq \Sigma^*$ is \emph{commutative}
if it is invariant under permutation of letter positions.
%
For example, the language $\set{ab, ba}$ is commutative, while $\set{ab}$ is not.
A language $L$ is \emph{regular} if it is recognised by a finite automaton~\cite[Ch.~2]{HopcroftMotwaniUllman:2000}.
For example, $a^*b^* = \setof{a^mb^n} {m, n \in \N}$ is regular,
while $\setof{a^nb^n}{n \in \N}$ is not.
A \emph{commutative regular language} is a language which is both commutative and regular.
It is well-known that they can be described with Boolean combinations
of basic threshold and modulo constraints on the number of occurrences of each letter~\cite[Ch.~10, Prop.~5.11]{HandookOfFormalLanguages:Vol1:1997}.
For instance
``even number of $a$'s and at most two $b$'s'' describes a commutative regular language.
The main technical result of the paper is the following closure property.

\thmClosureUnderCommutativeRegularSupportRestrictions*
\begin{proof}
    In the proof, we find it convenient to work with another, equivalent, definition of commutative regular languages.
    A language $L \subseteq \Sigma^*$ is \emph{recognisable}
    if there is a finite monoid $\tuple {M, \cdot, 1}$, a subset $F \subseteq M$,
    and a monoid homomorphism $h : \Sigma^* \to M$ \st~$L = h^{-1} F$~\cite[Ch.~III, Sec.~12]{Eilenberg:1974:A}.
    A monoid is \emph{commutative} if $x \cdot y = y \cdot x$ for all $x, y \in M$.
    The classes of commutative regular languages
    and those recognised by commutative monoids coincide.
    (This is a consequence of \emph{Eilenberg's variety theorem}~\cite[Ch.~7, Theorem~3.2]{Eilenberg:1974:B}.)
    Consider a shuffle-finite series $f \in \series \R \Sigma$
    and a commutative regular language $L \subseteq \Sigma^*$
    recognised by the finite commutative monoid $\tuple {M, \cdot, 1}$
    with accepting set $F \subseteq M$ and homomorphism $h : \Sigma^* \to M$.
    By definition, $f \in A$ for a finitely generated shuffle algebra $A := \poly \R {f^{(1)}, \dots, f^{(k)}}_{\shuffle}$
    closed under left derivatives.
    For every series $g \in \series \R \Sigma$ and $m \in M$,
    let $\restrict g m$ be the restriction of $g$ to the language $h^{-1} m$.
    Since $\setof {h^{-1} m \subseteq \Sigma^*} {m \in M}$ is a finite partition of $\Sigma^*$,
    we can write $g$ as a finite sum of its restrictions:
    \begin{align}
        \label{eq:decomposition}
        g = \sum_{m \in M} \restrict g m.
    \end{align}
    %
    %
    Thanks to this decomposition and~\cref{lem:shuffle-finite closure properties},
    it suffices to show that $\restrict f m$ is shuffle finite, for arbitrary $m \in M$.
    We begin with some observations on the interaction of restriction
    and the basic series operations.
    \begin{restatable}{claim*}{claimRegularSupportRestrictions}
        For every series $f, g \in \series \R \Sigma$ and $m \in M$,
        \begin{align}
            \label{eq:claim:restrict}
            \restrict {(\restrict f {m'})} m &=
                \begin{cases}
                    \restrict f m & \text{if } m = m', \\
                    0 & \text{otherwise}.
                \end{cases} \\
            \label{eq:claim:const}
            \restrict {(c \cdot f)} m
                &= c \cdot \restrict f m,
                && \forall c \in \R \\
            \label{eq:claim:sum}
            \restrict {(f + g)} m
                &= \restrict f m + \restrict g m, \\
            \label{eq:claim:shuffle}
            \restrict {(f \shuffle g)} m
                &= \sum_{m = x \cdot y} (\restrict f x \shuffle \restrict g y), \\
            \label{eq:claim:derive}
            \deriveleft a {(\restrict f m)}
                &= \sum_{m = h(a) \cdot m'} \restrict {\left(\deriveleft a f\right)} {m'},
                && \forall a \in \Sigma
        \end{align}
        where the sum in~\cref{eq:claim:shuffle} ranges over $x, y \in M$,
        and that in \cref{eq:claim:derive} over $m' \in M$.
    \end{restatable}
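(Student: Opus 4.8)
The plan is to verify all five identities by comparing coefficients at an arbitrary word $w \in \Sigma^*$, using the pointwise description of restriction: by definition $\coefficient w {(\restrict g m)}$ equals $g_w$ when $h(w) = m$, and $0$ otherwise. From this description the first three identities are essentially immediate. For~\cref{eq:claim:restrict}, applying the fibre test twice forces $h(w) = m$ and $h(w) = m'$ to hold simultaneously, which is unsatisfiable unless $m = m'$, in which case the two tests coincide. Identities~\cref{eq:claim:const,eq:claim:sum} are just the $\R$-linearity of the map $g \mapsto \restrict g m$, which is clear coefficient-wise since the fibre test is applied uniformly to every word.

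The crux is~\cref{eq:claim:shuffle}, and this is where commutativity of $M$ enters. The key observation is that whenever a word $w$ arises as an interleaving of words $u$ and $v$, its multiset of letters is the disjoint union of those of $u$ and $v$; since $h$ is a homomorphism into a \emph{commutative} monoid, $h(w)$ depends only on this multiset, hence $h(w) = h(u) \cdot h(v)$ regardless of the interleaving order. Expanding $\coefficient w {(f \shuffle g)}$ as the sum over pairs $(u, v)$ of $f_u \cdot g_v$ weighted by the number of interleavings of $u, v$ producing $w$, every contributing pair therefore satisfies $h(u) \cdot h(v) = h(w)$. Restricting the left-hand side to $h(w) = m$ retains exactly the pairs with $h(u) \cdot h(v) = m$; on the right-hand side, grouping the very same pairs according to $x := h(u)$ and $y := h(v)$ (so that $x \cdot y = m$) produces the identical total. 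Both sides vanish at any $w$ with $h(w) \neq m$.

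Identity~\cref{eq:claim:derive} follows by the same bookkeeping but more directly, and needs only that $h$ is a homomorphism. Unfolding the left derivative gives $\coefficient w {(\deriveleft a (\restrict f m))} = \coefficient {aw} {(\restrict f m)}$, which equals $f_{aw}$ exactly when $h(aw) = h(a) \cdot h(w) = m$. On the right, the summand indexed by $m'$ contributes $\coefficient w {(\restrict {(\deriveleft a f)} {m'})} = f_{aw}$ precisely when $h(w) = m'$ and $h(a) \cdot m' = m$; there is a unique such $m'$, namely $m' = h(w)$, and it occurs iff $h(a) \cdot h(w) = m$, matching the left-hand side.

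I expect the shuffle identity~\cref{eq:claim:shuffle} to be the only genuine obstacle: commutativity of $M$ is exactly what guarantees that all interleavings of a fixed pair $u, v$ land in a single fibre $h^{-1}(h(u) \cdot h(v))$, so that restriction of a shuffle splits as a convolution over $M$. For a noncommutative monoid the identity would already fail on the two-letter shuffle $a \shuffle b = ab + ba$, where the two summands may have distinct $h$-images.
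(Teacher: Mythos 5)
Your proposal is correct and matches the paper's proof in essence: the first three identities are dismissed as immediate, \cref{eq:claim:derive} is verified coefficient-wise with the same unique-$m'$ argument, and your key observation for \cref{eq:claim:shuffle} — that commutativity of $M$ forces every interleaving of $u$ and $v$ into the single fibre $h^{-1}(h(u) \cdot h(v))$ — is precisely the content of the paper's intermediate identity \cref{eq:claim:shuffle2}. The only (inessential) difference is that you verify the shuffle identity by comparing coefficients of a fixed word, whereas the paper runs the same fact through a series-level calculation using the decomposition $f = \sum_{x \in M} \restrict f x$.
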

    Consider the shuffle algebra generated by all restrictions of the original generators,
    $B := \polyof \R {\restrict {f^{(i)}} m} {1 \leq i \leq k, m \in M}_{\shuffle}$.
    By the decomposition~\cref{eq:decomposition}, we have $A \subseteq B$ and thus $f \in B$.
    We first show that $B$ is closed under restrictions.
    \begin{claim*}
        For every series $g \in B$ and $m \in M$,
        we have $\restrict g m \in B$.
    \end{claim*}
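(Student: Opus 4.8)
The plan is to prove the stronger statement that $\restrict g m \in B$ holds for \emph{every} $m \in M$ simultaneously, by structural induction on how $g$ is built from the generators. Every element of $B$ is a finite $\R$-linear combination of iterated shuffle products of the generators $\restrict {f^{(i)}} {m'}$ (the empty product being the unit $1 \cdot \e$), so it suffices to verify that the property ``$\restrict g m \in B$ for all $m \in M$'' holds on the generators and is preserved by the three algebra operations of scalar multiplication, addition, and shuffle product.

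For the base case, take a generator $g = \restrict {f^{(i)}} {m'}$. By the orthogonality rule~\cref{eq:claim:restrict}, its restriction $\restrict {(\restrict {f^{(i)}} {m'})} m$ equals $\restrict {f^{(i)}} {m'}$ when $m = m'$ and $\zero$ otherwise; in both cases it lies in $B$. The unit is handled similarly: since $h$ is a monoid homomorphism we have $h(\e) = 1$, so $\restrict {(1 \cdot \e)} m$ equals $1 \cdot \e$ if $m = 1$ and $\zero$ otherwise, again in $B$. For the inductive step, closure under scalar multiplication and addition is immediate from the linearity rules~\cref{eq:claim:const,eq:claim:sum}. The crucial case is the shuffle product: assuming $\restrict {g_1} x \in B$ and $\restrict {g_2} y \in B$ for all $x, y \in M$, the convolution rule~\cref{eq:claim:shuffle} gives
\begin{align*}
    \restrict {(g_1 \shuffle g_2)} m = \sum_{m = x \cdot y} \restrict {g_1} x \shuffle \restrict {g_2} y,
\end{align*}
a \emph{finite} sum (as $M$ is finite) of shuffle products of elements of $B$; since $B$ is a shuffle algebra, this lies in $B$.

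The only real subtlety is the shuffle case, and it is conceptual rather than computational: one must strengthen the induction hypothesis so that it ranges over \emph{all} $m \in M$, because the summands on the right of~\cref{eq:claim:shuffle} restrict the two factors to monoid elements $x, y$ that are in general different from $m$. Once this strengthening is in place, every step reduces to the rules established in the preceding claim, the finiteness of $M$ keeps all the sums finite, and nothing ever leaves the finitely generated shuffle algebra $B$.
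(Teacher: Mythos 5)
Your proof is correct and follows essentially the same route as the paper: the paper also writes $g$ as a polynomial (i.e., a linear combination of iterated shuffle products) in the restricted generators and proceeds by structural induction, handling each case with the rules~\cref{eq:claim:restrict,eq:claim:const,eq:claim:sum,eq:claim:shuffle}. Your explicit treatment of the unit case and of carrying the quantification over all $m \in M$ through the induction (needed in the shuffle case) merely spells out details the paper leaves implicit.
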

    \begin{claimproof}
        Since $g \in B$, there is a polynomial $p \in \poly \R {k \cdot \card M}$
        \st~$g = p((\restrict {f^{(i)}} m)_{1 \leq i \leq k, m \in M})$.
        Then $\restrict g m \in B$ is proved by an induction on the structure of $p$,
        where each case is handled by~\cref{eq:claim:restrict,eq:claim:const,eq:claim:sum,eq:claim:shuffle}.
    \end{claimproof}
    \noindent
    %
    Consider $g := \deriveleft a {(\restrict {f^{(i)}} m)}$.
    In order to conclude the proof, by~\cref{lem:working def of shuffle finite}
    it suffices to show $g \in B$.
    By~\cref{eq:claim:derive},
    we have $g = \sum_{m = m' \cdot h(a)} \restrict {\left(\deriveleft a f^{(i)}\right)} {m'} $,
    where the sum is over $m' \in M$.
    But $\deriveleft a f^{(i)} \in A \subseteq B$,
    and thus by the last claim $g \in B$.
\end{proof}






\section{Decision problems for shuffle-finite series}
\label{sec:decision problems for shuffle-finite series}
We consider the following three decision problems for shuffle-finite series:
equality (and zeroness), regular support inclusion, and commutativity.
In each case, shuffle-finite series are finitely presented via shuffle automata.
For the sake of computability, all quantities are rational numbers $\Q$, expressed in binary notation.

\subsection{Equality and zeroness problems}

In the \emph{equality problem} we are given two shuffle-finite series $f, g \in \series \Q \Sigma$ and we ask whether $f = g$.
In the special case $g = 0$, we obtain the \emph{zeroness} problem.
%
%
\begin{theorem}[\protect{\cite[Theorem 1]{Clemente:CONCUR:2024}}]
    \label{thm:shuffle-finite zeroness}
    The equality and zeroness problems for shuffle-finite series are decidable.
\end{theorem}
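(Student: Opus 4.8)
The plan is to reduce the equality problem to zeroness and then to decide zeroness by an ideal-theoretic saturation that terminates by Noetherianity. First I would reduce equality to zeroness: given shuffle automata recognising $f$ and $g$, the difference $f - g$ is shuffle finite by \cref{lem:shuffle-finite closure properties}, and a shuffle automaton for it is effectively constructible; since $f = g$ iff $f - g = \zero$, it suffices to decide zeroness. So fix a shuffle automaton $A = \tuple{\Sigma, X, \alpha_I, O, \Delta}$ and write $v := \tuple{O X_1, \dots, O X_k} \in \Q^k$ for the tuple of output values. By \cref{eq:semantics}, $\sem{\alpha_I} = \zero$ iff $(\Delta_w \alpha_I)(v) = 0$ for every $w \in \Sigma^*$, i.e.\ iff every iterated derivative of $\alpha_I$ lies in the maximal ideal $\m_v := \setof{p \in \poly \R X}{p(v) = 0}$.

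Second, I would identify the relevant ideal. Let $Z := \setof{\alpha \in \poly \R X}{\sem \alpha = \zero}$ be the kernel of the semantics map. By \cref{lem:properties of semantics - shuffle automata} this map is a homomorphism from $\tuple{\poly \R X, +, \cdot}$ to the shuffle algebra, so $Z$ is an ideal; and by \cref{eq:shuffle:sem:derivation} it is closed under every $\Delta_a$, hence a differential ideal. Evaluating at the empty word gives $Z \subseteq \m_v$, and in fact $Z$ is the largest differential ideal contained in $\m_v$: if $I \subseteq \m_v$ is differential and $\alpha \in I$, then $\Delta_w \alpha \in I \subseteq \m_v$ for all $w$, whence $\sem \alpha = \zero$ and $\alpha \in Z$. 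Thus zeroness amounts to deciding $\alpha_I \in Z$.

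Third, rather than construct the largest differential ideal $Z$ (whose natural greatest-fixpoint computation descends and need not stabilise, as $\poly \R X$ is Noetherian but not Artinian), I would compute the smallest differential ideal $D := \diffideal{\alpha_I}$ containing $\alpha_I$, as a least fixpoint. The algorithm keeps a finite generator set $G$ with its ideal $\ideal{G}$ and a worklist initialised with $\alpha_I$: a dequeued polynomial $p$ is discarded if $p \in \ideal{G}$ (an ideal-membership test via Gröbner bases, \cite{CoxLittleOShea:Ideals:2015}), and otherwise is added to $G$ while its derivatives $\Delta_a p$ (for $a \in \Sigma$) are enqueued. Each addition strictly enlarges $\ideal{G}$, so by the ascending chain condition only finitely many additions occur; since enqueuing happens only upon an addition, the process terminates. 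The product rule $\Delta_a(\alpha \cdot \beta) = \Delta_a \alpha \cdot \beta + \alpha \cdot \Delta_a \beta$ guarantees that the final $\ideal{G}$ is a genuine differential ideal, and since every generator has the form $\Delta_w \alpha_I$ it equals $D$. One then checks whether every generator vanishes at $v$, a finite test: if so then $D \subseteq \m_v$ and, being differential, $D \subseteq Z$, so $\alpha_I \in Z$ and $\sem{\alpha_I} = \zero$; conversely $\alpha_I \in Z$ forces $D \subseteq Z \subseteq \m_v$, so all generators vanish.

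The hard part is the termination argument together with the correctness of replacing $Z$ by $D$. Here it is essential to work with ideals rather than vector spaces: the derivations $\Delta_a$ may raise degree, so the $\R$-span of $\setof{\Delta_w \alpha_I}{w \in \Sigma^*}$ can be infinite-dimensional and a linear-algebraic Hankel-style argument fails, whereas the ascending chain of ideals must stabilise by Hilbert's basis theorem. The two extremal characterisations — $Z$ as the largest differential ideal inside $\m_v$ and $D$ as the smallest differential ideal above $\alpha_I$ — are precisely what make the single finite test $D \subseteq \m_v$ both sound and complete.
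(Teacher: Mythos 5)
Your proof is correct and takes essentially the same route as the paper: reduce equality to zeroness via the effective closure properties, then saturate the ideal generated by the iterated derivatives $\Delta_w \alpha_I$, terminate by Hilbert's basis theorem, decide membership with Gr\"obner bases, and finally test whether all generators vanish at the output point $v = \tuple{O X_1, \dots, O X_k}$ --- your worklist computation of the least differential ideal $D = \diffideal{\alpha_I}$ is just a reorganisation of the paper's length-stratified chain $I_0 \subseteq I_1 \subseteq \cdots$, whose union is exactly $D$, and your framing via the largest differential ideal $Z \subseteq \m_v$ versus the smallest one above $\alpha_I$ makes explicit the soundness/completeness argument that the paper leaves implicit. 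The only substantive element the paper adds is quantitative: a doubly exponential bound on the stabilisation index $N$ (following Novikov--Yakovenko), which is irrelevant for bare decidability.
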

\noindent
We quickly recall the algorithm from~\cite{Clemente:CONCUR:2024}.
Thanks to the effective closure properties of shuffle-finite series (\cf~\cref{lem:properties of semantics - shuffle automata}),
equality reduces to zeroness.
Let $f$ be a shuffle-finite series recognised by a shuffle automaton $A = \tuple{\Sigma, X, \alpha_I, O, \Delta}$.
The zeroness algorithm constructs nondecreasing chain of polynomial ideals
\begin{align*}
    I_0 \subseteq I_1 \subseteq \cdots \subseteq \poly \R X,
    \quad \text{where } I_n := \idealof {\Delta_w \alpha_I}{w \in \Sigma^{\leq n}}.
\end{align*}
Here, $I_n$ is the ideal generated by all configurations reachable from the initial configuration $\alpha_I$ by reading words of length $\leq n$.
The chain above terminates at some $I_N = I_{N+1} = \cdots$
by Hilbert's finite basis theorem~\cite[Theorem 4, §5, Ch. 2]{CoxLittleOShea:Ideals:2015},
and, generalising the analysis of~\cite[Theorem 4]{NovikovYakovenko:1999},
one shows that $N$ is at most doubly exponential in the size of the input (\cf~\cite{Clemente:WBPP:arXiv:2024} for more details).
Using algorithms for ideal equality~\cite{Mayr:STACS:1989} (\eg, via Gr\"obner bases),
$N$ can be computed and zeroness reduces to checking whether the generators of $I_N$ vanish on the output function $O$.

\subsection{Regular support inclusion problem}

In the \emph{regular support inclusion problem} we are given
a shuffle-finite series $f \in \series \Q \Sigma$ and a regular language $L \subseteq \Sigma^*$,
and we ask whether $\support f \subseteq L$.
This problem is decidable when $L$ is a commutative regular language.

\begin{theorem}
    \label{thm:regular support inclusion}
    The commutative regular support inclusion problem for shuffle-finite series is decidable.
\end{theorem}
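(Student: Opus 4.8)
The plan is to reduce the commutative regular support inclusion problem to the zeroness problem, exploiting the closure property of~\cref{thm:closure under commutative regular support restriction}. The starting observation is the elementary equivalence $\support f \subseteq L \iff \restrict f {\closure L} = \zero$, where $\closure L := \Sigma^* \setminus L$ denotes the complement of $L$: the left-hand side says precisely that $f_w = 0$ for every word $w \notin L$, which is exactly the statement that the restriction of $f$ to $\closure L$ is the zero series. So the membership question for supports becomes a vanishing question for a restricted series.

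To make this effective, I would first note that $\closure L$ is again a commutative regular language. Regular languages are effectively closed under complement, and commutativity is trivially preserved under complementation: if $L$ is invariant under permuting letter positions, then so is $\Sigma^* \setminus L$. Hence from a finite presentation of $L$ one effectively obtains a finite presentation of $\closure L$ as a commutative regular language.

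With this in hand, the proof concludes by two appeals to earlier results. By~\cref{thm:closure under commutative regular support restriction}, from a shuffle automaton for $f$ together with the commutative regular language $\closure L$ one effectively constructs a shuffle automaton recognising the shuffle-finite series $\restrict f {\closure L}$. Applying the decidability of zeroness from~\cref{thm:shuffle-finite zeroness} to this series then decides whether $\restrict f {\closure L} = \zero$, and therefore whether $\support f \subseteq L$. I expect that essentially all the genuine difficulty has already been absorbed into the closure property of~\cref{thm:closure under commutative regular support restriction}; given that result, the present reduction is routine, the only mild subtlety being the effective complementation of commutative regular languages, which is standard.
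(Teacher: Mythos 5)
Your proof is correct and follows essentially the same route as the paper: reduce $\support f \subseteq L$ to zeroness of $\restrict f {\Sigma^* \setminus L}$, note that the complement of a commutative regular language is again commutative regular, and then combine \cref{thm:closure under commutative regular support restriction} with \cref{thm:shuffle-finite zeroness}. Nothing is missing; the paper's own proof is exactly this three-step reduction.
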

\begin{proof}
    The query $\support g \subseteq L$ reduces to $h := \restrict g {\Sigma^* \setminus L} = 0$.
    Since commutative regular languages are closed under complementation~\cite[Theorem 4.5]{HopcroftMotwaniUllman:2000},
    $\Sigma^* \setminus L$ is also commutative regular.
    By \cref{thm:closure under commutative regular support restriction},
    $h$ is effectively shuffle finite, and thus we conclude by \cref{thm:shuffle-finite zeroness}.
\end{proof}

\subsection{Commutativity problem}

Consider a sub-alphabet $\Gamma \subseteq \Sigma$.
For two words $u, v \in \Sigma^*$ we write $u \sim_\Gamma v$
if one word can be obtained from the other by permuting the letters in $\Gamma$.
For instance, for $\Gamma = \set{a_0}$ we have $a_0 a_1 a_2 \sim_\Gamma a_1 a_0 a_2 \sim_\Gamma a_1 a_2 a_0$
but $a_1 a_2 \not\sim_\Gamma a_2 a_1$.
A series $g \in \series \R \Sigma$ is \emph{commutative in $\Gamma$}
if, for every words $u \sim_\Gamma v$ we have $g_u = g_v$.
%
When $\Gamma = \Sigma$, we just say that $g$ is \emph{commutative}.
In the \emph{commutativity problem} we are given a shuffle-finite series $f \in \series \Q \Sigma$ and a sub-alphabet $\Gamma \subseteq \Sigma$,
and we ask whether $f$ is commutative in $\Gamma$.

\begin{theorem}
    \label{thm:commutativity}
    The commutativity problem for shuffle-finite series is decidable.
\end{theorem}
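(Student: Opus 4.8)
The plan is to reduce commutativity of $f$ in $\Gamma$ to finitely many commutative regular support inclusion queries, and then invoke \cref{thm:regular support inclusion}. Call an ordered pair of letters $(a,b) \in \Sigma \times \Sigma$ \emph{$\Gamma$-swappable} if $a \in \Gamma$ or $b \in \Gamma$. First I would observe that $\sim_\Gamma$ is the equivalence generated by the elementary rewrites $x\,ab\,y \sim_\Gamma x\,ba\,y$ over all $x, y \in \Sigma^*$ and $\Gamma$-swappable pairs $(a,b)$: any $\sim_\Gamma$-rearrangement moves the $\Gamma$-letters around and permutes them among themselves, which is realised by a sequence of adjacent transpositions each involving at least one letter of $\Gamma$, while no transposition of two letters outside $\Gamma$ is ever needed. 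Consequently $f$ is commutative in $\Gamma$ if and only if $f_{x ab y} = f_{x ba y}$ for all $x, y \in \Sigma^*$ and every $\Gamma$-swappable pair $(a,b)$. As there are only finitely many such pairs, it suffices to decide this last condition for one fixed pair.

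Fix a shuffle automaton $A = \tuple{\Sigma, X, \alpha_I, O, \Delta}$ recognising $f$, so that $f_w = O \Delta_w \alpha_I$. The key idea is to encode the \emph{interior} swap as the reading of a fresh letter. I would take $\star \notin \Sigma$, set $\Sigma' := \Sigma \cup \set\star$, and define on $\poly \R X$ the operator $E := \Delta_b \Delta_a - \Delta_a \Delta_b$. Since the commutator (Lie bracket) of two derivations of a commutative algebra is again a derivation, $E$ is a derivation of $\poly \R X$, hence a legal shuffle-automaton transition. I then form the shuffle automaton $C := \tuple{\Sigma', X, \alpha_I, O, \Delta'}$ with $\Delta'_c := \Delta_c$ for $c \in \Sigma$ and $\Delta'_\star := E$, and let $g := \sem{\alpha_I}^{C} \in \series \R {\Sigma'}$ be the (shuffle-finite) series it recognises. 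Unwinding the semantics using the factorisation $\Delta'_{u \star v} = \Delta_v \, E \, \Delta_u$, one computes for all $u, v \in \Sigma^*$ that $g_{u \star v} = O \Delta_v (\Delta_b \Delta_a - \Delta_a \Delta_b) \Delta_u \alpha_I = f_{u ab v} - f_{u ba v}$.

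Therefore the condition ``$f_{x ab y} = f_{x ba y}$ for all $x,y$'' holds if and only if $g_{w} = 0$ for every word $w$ containing exactly one occurrence of $\star$, that is, if and only if $\support g \subseteq L$ where $L := \setof{w \in \Sigma'^*}{\length \star w \neq 1}$. The language $L$ is defined purely by a constraint on the number of occurrences of $\star$, hence is commutative regular, so this inclusion is decidable by \cref{thm:regular support inclusion}. Running this test for each of the finitely many $\Gamma$-swappable pairs and taking the conjunction decides commutativity in $\Gamma$.

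The main conceptual step is the commutator-derivation encoding that turns an interior swap into a single transition; once this is in place, the heavy lifting is delegated entirely to the commutative regular support inclusion theorem (\cref{thm:regular support inclusion}), which itself rests on \cref{thm:closure under commutative regular support restriction}. The only points requiring care are the verification that $\sim_\Gamma$ is generated by $\Gamma$-swappable adjacent transpositions and that $\Delta_b\Delta_a - \Delta_a\Delta_b$ is again a derivation; both are routine, so I expect no serious obstacle beyond correctly setting up the extended automaton $C$ and the target language $L$.
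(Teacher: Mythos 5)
Your proof is correct, and it takes a genuinely different route from the paper's. The paper reduces commutativity in $\Gamma$ to finitely many \emph{series equalities},
$\deriveleft a {\deriveleft b g} = \deriveleft b {\deriveleft a g}$ and $\deriveleft a g = \deriveright a g$ for $a, b \in \Gamma$
(swaps of the two leading letters, plus rotations of a leading letter to the end), each decided by \cref{thm:shuffle-finite zeroness} together with closure under left and right derivatives (\cref{lem:shuffle-finite closure properties}); it never invokes \cref{thm:regular support inclusion}. You instead characterise $\sim_\Gamma$ by adjacent transpositions in \emph{arbitrary contexts}, realise the interior commutator $\Delta_b\Delta_a - \Delta_a\Delta_b$ as the transition of a fresh letter $\star$ (the commutator of two derivations is indeed a derivation, so this is a legal shuffle automaton), and absorb the universal quantification over contexts into the support inclusion query $\support g \subseteq \setof{w \in (\Sigma \cup \set\star)^*}{\length \star w \neq 1}$, whose right-hand side is commutative regular. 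Your key computation $g_{u \star v} = f_{uabv} - f_{ubav}$ is right under the paper's convention $\Delta_{a \cdot w} = \Delta_w \Delta_a$, and the reduction to finitely many $\Gamma$-swappable pairs is sound, so your algorithm is correct.

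The comparison is not merely stylistic. The paper's test is lighter (no automaton surgery, no support restrictions), but its equations constrain swaps and rotations only at the boundary of the word, and for a proper sub-alphabet $\Gamma \subsetneq \Sigma$ this is too weak: take $\Sigma = \set{a,b,c}$, $\Gamma = \set{a}$, and the finitely supported (hence shuffle-finite) series $g$ with $g_{bac} = 1$ and all other coefficients zero. Then $\deriveleft a g = \deriveright a g = 0$, since no word in the support of $g$ begins or ends with $a$, and the swap equations are trivial for a singleton $\Gamma$; yet $bac \sim_\Gamma abc$ and $g_{bac} = 1 \neq 0 = g_{abc}$, so $g$ passes the paper's test without being commutative in $\Gamma$. (The generation claim is fine in the case $\Gamma = \Sigma$ of \cite[Theorem 7]{Clemente:arXiv:LICS:2025}, where rotations always apply and, combined with a leading swap, yield every adjacent transposition; it is precisely the extension to sub-alphabets that needs identities closed under arbitrary prefixes.) Your heavier machinery buys exactly what is missing: the $\star$-encoding decides $f_{xaby} = f_{xbay}$ for \emph{all} contexts $x, y$, so your proof stands where the paper's own argument has a genuine gap.
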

\begin{proof}
    We have shown in~\cite[Theorem 7]{Clemente:arXiv:LICS:2025}
    that the commutativity problem when $\Gamma = \Sigma$ is decidable.
    We extend the decision procedure to any sub-alphabet $\Gamma \subseteq \Sigma$.
    To this end, observe that a series $g \in \series \R \Sigma$ is commutative in $\Gamma$ if, and only if,
    it satisfies the following two family of equations:
    \begin{align}
        \deriveleft a {\deriveleft b g} = \deriveleft b {\deriveleft a g}
            \quad\text{and}\quad
                \deriveleft a g = \deriveright a g,
        &&\forall a, b \in \Gamma.
    \end{align}
    This holds since swaps and rotations suffice to generate all $\sim_\Gamma$-equivalent words.
    It follows that we can decide commutativity in $\Gamma$
    by checking the above equations for all pairs of letters in $\Gamma$,
    which is decidable by~\cref{thm:shuffle-finite zeroness}.
\end{proof}
\section{Formal analysis of polynomial systems}

In this section we leverage the decidability results for shuffle-finite series from~\cref{sec:decision problems for shuffle-finite series}
to decide the semantic properties~\cref{P1,P2,P3,P4} of polynomial systems.

\subsection{Equivalence}

Two functionals $F, G : \D^m \to \D$ are \emph{equal}
if, for every input $u \in \D^m$, we have $F(u) = G(u)$.
Equal causal analytic functionals have the same generating series.
\begin{lemma}[\protect{\cite[Lemme~II.1]{Fliess:1981}}]
    \label{lem:equality}
    Consider two causal analytic functionals $F_g, F_h$.
    Then, $F_g = F_h$ iff $g = h$.
\end{lemma}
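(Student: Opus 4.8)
The ``if'' direction is immediate from the definition \cref{eq:causal analytic functional}: equal series yield equal functionals. For the converse I would prove the stronger statement that the Fliess operator is \emph{injective}. Using that $F$ is linear (\cref{eq:scalar,eq:sum} of \cref{lem:Fliess homomorphism}), the hypothesis $F_g = F_h$ gives $F_{g - h} = F_g - F_h = 0$, so it suffices to show that $F_g = 0$ (the zero functional) forces $g = \zero$. The plan is then to recover every coefficient $g_w$ from the functional $F_g$ alone.

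The base case records the empty word: since $\coefficient {t^0} {F_g(u)} = g_\e$ for every input $u$ (as noted after \cref{eq:causal analytic functional}), the vanishing of $F_g$ gives $g_\e = 0$. For longer words I would exploit the first-order behaviour of the iterated integrals. Differentiating the defining identity \cref{eq:causal analytic functional} term by term and using $F_{a_j \cdot w}(u) = \int (u_j \cdot F_w(u))$ from \cref{eq:iterated integral} together with the fundamental relation $(\int v)' = v$, one obtains the recursion
\begin{align}
    \label{eq:fliess-recursion-plan}
    \left(F_g(u)\right)' \;=\; \sum_{j = 0}^m u_j \cdot F_{\deriveleft {a_j} g}(u),
    \qquad \forall u \in \D^m,
\end{align}
which is simply the functional counterpart of peeling the leading letter off each word in the expansion of $g$. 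Since $g_w = (\deriveleft {a_{j_n}} \cdots \deriveleft {a_{j_1}} g)_\e$ for $w = a_{j_1} \cdots a_{j_n}$, knowing that $F_g = 0$ implies $F_{\deriveleft {a_j} g} = 0$ for each $j$ would let me iterate the recursion and, by repeatedly taking $\e$-coefficients, conclude $g_w = 0$ for all $w$.

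The separation of the summands in \cref{eq:fliess-recursion-plan} is the crux and the main obstacle: each $F_{\deriveleft {a_j} g}$ is nonlinear in $u$, and the coordinate $u_0 \equiv 1$ is always present, so one cannot isolate a single term merely by setting inputs to zero. I would resolve this by making the dependence on the inputs fully explicit. By causality, each coefficient $\coefficient {t^N} {F_g(u)}$ is a polynomial in the scalar input coefficients $\setof {\coefficient {t^n} {u_j}} {1 \leq j \leq m, \, n < N}$, and $F_g = 0$ forces every such polynomial to be identically zero. A direct computation shows that $F_w(u)$ is homogeneous of degree $\length {} w$ in these scalars, so words of a fixed length contribute in a fixed degree and cannot cancel against words of other lengths; within a fixed length, the \emph{time-ordered} structure of nested integration records the order of the letters through the indices (and binomial weights) of the input coefficients that appear, yielding an invertible linear system from which each individual $g_w$ can be read off. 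This final step is precisely the classical linear independence of iterated integrals, and the genuine work lies in carrying out this combinatorial inversion cleanly rather than in the surrounding formal bookkeeping.
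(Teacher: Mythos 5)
First, a point of comparison: the paper does not prove \cref{lem:equality} at all — it imports the statement from \cite[Lemme~II.1]{Fliess:1981} and points to \cite[Theorem~7]{GrayWang:MTNS:2008} for a proof in the formal setting — so you are attempting to prove what the paper treats as a black box. Judged on its own terms, your proposal starts soundly: the ``if'' direction, the reduction by linearity to showing $F_g = 0 \limplies g = \zero$, the base case $g_\e = \coefficient {t^0} {(F_g(u))}$, and the recursion $(F_g(u))' = \sum_{j=0}^m u_j \cdot F_{\deriveleft {a_j} g}(u)$ (which is exactly \cref{eq:derivative}, proved in the paper's appendix for a different purpose) are all correct, and you rightly identify that everything hinges on separating the summands of that recursion.

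The resolution you sketch, however, has a genuine gap, in two respects. First, the homogeneity claim is false: because of the convention $u_0 \equiv 1$, the polynomial $\coefficient {t^N} {(F_w(u))}$ is homogeneous of degree equal to the number of letters of $w$ \emph{other than} $a_0$, not of degree $\length {} {w}$. For example $F_{a_0}(u) = t$ has degree $0$, while $F_{a_1}(u)$, $F_{a_0 a_1}(u)$ and $F_{a_1 a_0}(u)$ are all homogeneous of degree $1$; so words of different lengths do meet in the same degree, and your grading does not separate them — ironically, you flagged $u_0 \equiv 1$ as the central obstacle and then your degree count ignores it. Second, and more fundamentally, the ``invertible linear system'' you invoke at the end is not an auxiliary fact you may quote: the assertion that the time-ordered structure of nested integration makes the family $\setof {F_w} {w \in \Sigma^*}$ linearly independent (in the appropriate summable sense) \emph{is} \cref{lem:equality}, restated. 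Writing that ``the genuine work lies in carrying out this combinatorial inversion cleanly'' while not carrying it out reduces the lemma to itself. A complete proof must actually perform that separation — for instance along the lines of the formal-setting proof of \cite[Theorem~7]{GrayWang:MTNS:2008}, or the classical piecewise-constant-input arguments behind \cite[Lemma~2.1]{WangSontag:1989} and \cite[Lemma~3.1.2]{Isidori:NCS:1995} — and none of that work appears in your proposal.
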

\noindent
This is a crucial result in the development of the theory of Chen-Fliess series,
as testified by its repeated occurrence in the literature\-
\cite[Corollaire~II.5]{Fliess:EMS:1986}, \cite[Lemma 2.1]{WangSontag:1989}, \cite[Lemma~3.1.2]{Isidori:NCS:1995}.
See~\cite[Theorem 7]{GrayWang:MTNS:2008} for a proof in the formal setting (\cf~\cite[Theorem 3.38]{GrayFPS_2022}).

The \emph{equivalence problem} for polynomial systems takes as input two polynomial systems $\SS, \TT$
and amounts to decide whether their causal functionals are equal $\sem \SS = \sem \TT$.
%
%
By \cref{thm:coincidence}, the semantics of $\SS, \TT$ are analytic functionals $\sem \SS = F_g$, resp., $\sem \TT = F_h$
for shuffle-finite generating series $g, h \in \series \R \Sigma$.
Thanks to~\cref{lem:equality}, equivalence is thus reduced to deciding $g = h$,
which is an instance of the equality problem for shuffle-finite series.
By applying~\cref{thm:shuffle-finite zeroness}, we obtain the main result of this section.
\begin{theorem}
    The equivalence problem for polynomial systems is decidable. 
\end{theorem}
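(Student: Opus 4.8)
The plan is to reduce the equivalence of $\sem\SS$ and $\sem\TT$ to the equality of two shuffle-finite series, and then invoke the already-established decidability of that equality.

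First I would appeal to \cref{thm:coincidence} to pass from the syntactic description of each polynomial system to a finite presentation of its generating series. Concretely, given $\SS$ (and likewise $\TT$), the coincidence of generating series of polynomial systems with series recognised by shuffle automata yields an effective construction of a shuffle automaton recognising a series $g$ with $\sem\SS = F_g$; the construction is read off from Fliess' fundamental formula, translating the state dynamics $p_0,\dots,p_m$ and the output polynomial $q$ of the system into the transition derivation $\Delta$ and the output function $O$ of the automaton. This step is essentially a matter of translation and carries no semantic content beyond the coincidence theorem.

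Next I would use \cref{lem:equality} to transfer the question from functionals to series: since $\sem\SS = F_g$ and $\sem\TT = F_h$ are causal analytic functionals, Fliess' lemma gives $\sem\SS = \sem\TT$ if and only if $g = h$. This equivalence is the conceptual heart of the reduction---it is precisely what makes the generating series a faithful invariant of the input-output behaviour---but in the present development it is imported as a known result.

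Finally, the equality $g = h$ is an instance of the equality problem for shuffle-finite series, which is decidable by \cref{thm:shuffle-finite zeroness} (reducing equality to zeroness through closure under subtraction, then running the ideal-chain zeroness procedure). The only point requiring care is effectiveness: one must observe that the shuffle automata produced in the first step have rational data whenever the systems do, so that the zeroness algorithm---which manipulates polynomial ideals over $\Q$---applies verbatim. Since the translation only combines the polynomials defining $\SS$ and $\TT$, rationality is preserved. I expect the first step (making the translation to shuffle automata explicit and effective) to be the main thing to justify carefully; the remaining two steps are direct applications of the cited results.
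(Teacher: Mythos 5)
Your proposal is correct and follows exactly the paper's own argument: apply \cref{thm:coincidence} to obtain shuffle-finite generating series $g, h$ with $\sem\SS = F_g$ and $\sem\TT = F_h$, invoke \cref{lem:equality} to reduce $\sem\SS = \sem\TT$ to $g = h$, and decide the latter via \cref{thm:shuffle-finite zeroness}. Your added remark on effectiveness (the translation from systems to shuffle automata preserves rational data) is a point the paper handles implicitly by its standing assumption that all input data is over $\Q$, so it is a welcome but not divergent elaboration.
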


\begin{remark}[Equivalence vs.~structural equivalence]
    In the equivalence problem the initial states of $\SS, \TT$ are given as part of the input. 
    One could consider a stronger \emph{structural equivalence}
    where $\SS, \TT$ are required to have the same semantics \emph{for every initial state}.
    For instance, the equations $x' = 0, y = x$ represent a system which is equivalent to $0$ for the initial state $x(0) = 0$,
    but not structurally equivalent to it.
    Structural equivalence and zeroness are much simpler problems for polynomial systems.
    For instance, a system is structurally zero iff the output polynomial is identically zero.
\end{remark}
\subsection{Input independence}

Consider a set of indices $J \subseteq \set{1, \dots, m}$.
A functional $F : \D^m \to \D$ is \emph{independent of inputs $J$}
if the value of $F$ does not depend on the inputs $u_j$ for $j \in J$.
We now provide a formal definition.
For a tuple of inputs $u \in \D^m$, let $u_J \in \D^m$
be the tuple of inputs which agrees with $u$ on $J$, and is zero otherwise.
Write $u_{\setminus J}$ for $u_{\set {1, \dots, m} \setminus J}$,
so that we have $u = u_J + u_{\setminus J}$.
A functional $F : \D^m \to \D$ is \emph{independent of inputs $J$} if
\begin{align}
    \label{eq:input independence}
    F(u_{\setminus J} + u_J) = F(u_{\setminus J} + v_J),
    \quad \forall u, v \in \D^m.
\end{align}
For causal analytic functionals input independence can be characterised as a property of the generating series.
For an alphabet $\Sigma = \set{a_0, \dots, a_m}$,
let $\Sigma_J := \setof {a_j \in \Sigma} {j \in J}$
and $\Sigma_{\setminus J} := \Sigma \setminus \Sigma_J$.
\begin{lemma}
    \label{lem:input independence}
    A causal analytic functional $F_g$
    is independent of inputs $J \subseteq \set{1, \dots, m}$ iff
    \begin{align}
        \label{eq:input independence characterisation}
        \support g \subseteq \Sigma_{\setminus J}^*.
    \end{align}
\end{lemma}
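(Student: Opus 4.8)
The plan is to reduce both directions to the faithfulness of the Fliess operator (\cref{lem:equality}) by comparing $g$ with its restriction $g' := \restrict g {\Sigma_{\setminus J}^*}$ to the words avoiding the letters in $\Sigma_J$. The whole argument rests on two elementary properties of the iterated integrals $F_w$ from \cref{eq:iterated integral}, each established by a routine induction on $\length {} w$: \textbf{(a)} $F_w(u)$ depends only on those components $u_j$ for which the letter $a_j$ actually occurs in $w$; and \textbf{(b)} if $w$ contains at least one letter from $\Sigma_J$, then $F_w(u_{\setminus J}) = 0$, because zeroing the $J$-components forces a vanishing factor $u_j = 0$ at the corresponding level of the nested integral. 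In the inductive step for \textbf{(b)} I would write $w = a_{j_0} \cdot w'$ and split on whether $j_0 \in J$: if it does, the outermost integrand $u_{j_0} \cdot F_{w'}(u_{\setminus J})$ vanishes since the $j_0$-component of $u_{\setminus J}$ is zero; otherwise $w'$ still contains a $J$-letter and the inductive hypothesis gives $F_{w'}(u_{\setminus J}) = 0$.

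For the easy direction (the support condition \cref{eq:input independence characterisation} implies independence), I would assume $\support g \subseteq \Sigma_{\setminus J}^*$, so that $g = g'$ and hence $F_g(u) = \sum_{w \in \Sigma_{\setminus J}^*} g_w \cdot F_w(u)$. Since the two inputs $u_{\setminus J} + u_J$ and $u_{\setminus J} + v_J$ agree on every component outside $J$, property \textbf{(a)} yields $F_w(u_{\setminus J} + u_J) = F_w(u_{\setminus J} + v_J)$ for each such $w$; summing over $w$ gives exactly \cref{eq:input independence}.

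For the converse (the hard direction), I would assume $F_g$ is independent of inputs $J$ and instantiate \cref{eq:input independence} with a tuple $v$ satisfying $v_J = 0$, obtaining $F_g(u) = F_g(u_{\setminus J})$ for every $u \in \D^m$. I then compute the right-hand side: by \textbf{(b)} every word containing a letter of $\Sigma_J$ contributes $0$ to $F_g(u_{\setminus J})$, so $F_g(u_{\setminus J}) = \sum_{w \in \Sigma_{\setminus J}^*} g_w \cdot F_w(u_{\setminus J})$, which by \textbf{(a)} equals $F_{g'}(u)$. Hence $F_g = F_{g'}$ as causal analytic functionals, and \cref{lem:equality} forces $g = g'$, that is, $\support g \subseteq \Sigma_{\setminus J}^*$, as desired.

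I expect the only delicate point to be the bookkeeping in property \textbf{(b)} together with the identity $F_g(u_{\setminus J}) = F_{g'}(u)$, i.e.\ making precise that restricting the support of $g$ to $\Sigma_{\setminus J}^*$ is the very same operation as evaluating the functional at $u_{\setminus J}$; once these two facts are in place the statement is an immediate application of the faithfulness lemma \cref{lem:equality}.
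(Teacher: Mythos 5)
Your proposal is correct and follows essentially the same route as the paper: both directions rest on the same two facts about iterated integrals (dependence only on the letters occurring in $w$, and vanishing of $F_w(u_{\setminus J})$ when $w$ contains a $\Sigma_J$-letter), the same instantiation $v_J = 0$, and the same final appeal to the faithfulness of the Fliess operator (\cref{lem:equality}). The only cosmetic difference is bookkeeping—you prove $F_g = F_{g'}$ with $g' := \restrict g {\Sigma_{\setminus J}^*}$ and conclude $g = g'$, whereas the paper shows the difference equals $F_{\restrict g {\Sigma^* \cdot \Sigma_J \cdot \Sigma^*}}(u) = 0$ and concludes that this restriction vanishes; these are the same statement.
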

\begin{proof}
    If~\cref{eq:input independence characterisation} holds,
    then input independence~\cref{eq:input independence} is obvious
    from the definition of $F_g$~(\cf~\cref{eq:iterated integral,eq:causal analytic functional}).
    On the other hand, assume that input independence~\cref{eq:input independence} holds.
    We adapt the proof from~\cite[Lemma 3.3.1]{Isidori:NCS:1995}.
    Take $v_J = 0$.
    The difference of the two sides of~\cref{eq:input independence} is
    \begin{align*}
        &F_g(u) - F_g(u_{\setminus J})
        = \sum_{w \in \Sigma^*} g_w \cdot (I_w(u) - I_w(u_{\setminus J})) = \\
        &= \sum_{w \in \Sigma^* \cdot \Sigma_J \cdot \Sigma^*} g_w \cdot I_w(u) = F_{\restrict g {\Sigma^* \cdot \Sigma_J \cdot \Sigma^*}}(u) = 0.
    \end{align*}
    We have used $I_w(u) = I_w(u_{\setminus J})$ if $w$ does not contain any letter from $\Sigma_J$,
    and $I_w(u_{\setminus J}) = 0$ otherwise.
    By~\cref{lem:equality}, we get $\restrict g {\Sigma^* \cdot \Sigma_J \cdot \Sigma^*} = 0$,
    from which~\cref{eq:input independence characterisation} follows.
\end{proof}

A polynomial system $\SS$ is \emph{independent of inputs $J$} if its semantics $\sem \SS$ is.
The \emph{input independence problem} takes a polynomial system $\SS$ and a set of inputs $J$ thereof
and asks whether $\SS$ is independent of inputs $J$.
\begin{remark}
    One can find in \cite[Sec.~3.3]{Isidori:NCS:1995} a stronger notion of input independence,
    which requires that the output of the system is independent of the inputs
    \emph{for every initial state}.
    We could call this property \emph{structural independence}.
    This is much stronger than independence~\cref{eq:input independence},
    for instance $x' = u_1 \cdot x, y = x$, which can be solved explicitly as $y = x(0) \cdot e^{u_1}$,
    is independent of $u_1$ for the initial state $x(0) = 0$,
    but not structurally independent.
\end{remark}
Thanks to \cref{thm:coincidence} and \cref{lem:input independence},
we reduce the input independence problem for a polynomial system $\SS$
with shuffle-finite generating series $g$ to the regular support inclusion query \cref{eq:input independence characterisation},
where $\Sigma_J^*$ is commutative regular.
By~\cref{thm:regular support inclusion} we obtain the main result of this section.
\begin{theorem}
    The input independence problem for polynomial systems is decidable. 
\end{theorem}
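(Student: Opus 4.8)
The plan is to reduce the problem, via Fliess' combinatorial characterisation of input independence, to a commutative regular support inclusion query for shuffle-finite series, which has already been shown decidable.

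First I would fix an instance: a polynomial system $\SS$ with $m$ inputs and a set of indices $J \subseteq \set{1, \dots, m}$. By \cref{thm:coincidence}, the semantics of $\SS$ is a causal analytic functional $\sem \SS = F_g$ whose generating series $g \in \series \R \Sigma$ is shuffle finite; moreover this passage is effective, i.e.\ one can compute a shuffle automaton for $g$ from the description of $\SS$. By \cref{lem:input independence}, the functional $F_g$ is independent of the inputs $J$ if and only if $\support g \subseteq \Sigma_{\setminus J}^*$.

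It then remains to observe that the target language $\Sigma_{\setminus J}^*$ is commutative regular: it consists exactly of the words in which every letter of $\Sigma_J$ occurs zero times, a conjunction of threshold constraints on letter counts, and hence commutative regular. Consequently the query $\support g \subseteq \Sigma_{\setminus J}^*$ is an instance of the commutative regular support inclusion problem, which is decidable by \cref{thm:regular support inclusion}; this decides input independence, completing the argument.

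I do not expect a genuine obstacle in this step, since the substantive work has already been carried out upstream: the decidability of the support inclusion query rests on \cref{thm:regular support inclusion}, which in turn relies on the main closure property \cref{thm:closure under commutative regular support restriction} (effective closure of shuffle-finite series under commutative regular support restrictions) followed by a zeroness test. The proof here is a direct reduction, and the only point requiring (routine) care is verifying effectiveness of the two ingredients, namely computing a shuffle automaton for $g$ from $\SS$ and exhibiting $\Sigma_{\setminus J}^*$ as a commutative regular language.
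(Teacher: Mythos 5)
Your proposal is correct and takes essentially the same route as the paper: both reduce input independence via \cref{thm:coincidence} and \cref{lem:input independence} to the support inclusion query $\support g \subseteq \Sigma_{\setminus J}^*$, observe that this language is commutative regular, and conclude by \cref{thm:regular support inclusion}. The only difference is cosmetic—you make explicit the effectiveness of computing a shuffle automaton from $\SS$ and the threshold-constraint description of $\Sigma_{\setminus J}^*$, which the paper leaves implicit.
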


\subsection{Linearity}

Let $J \subseteq \set {1, \dots, m}$ be a set of indices.
A functional $F$ is \emph{additive on inputs $J$} if,
for every input $u, v \in \D^m$,
\begin{align}
    \label{eq:additivity}
    \begin{aligned}
        F(u_{\setminus J} + u_J + v_J)
        = F(u_{\setminus J} + u_J) + F(u_{\setminus J} + v_J),
    \end{aligned}
\end{align}
it is \emph{proportional on inputs $J$} (\ie, homogeneous of the first degree) if,
for every input $u \in \D^m$ and constant $\alpha \in \R$,
\begin{align}
    \label{eq:proportionality}
    \begin{aligned}
        F(u_{\setminus J} + \alpha \cdot u_J)
        = \alpha \cdot F(u_{\setminus J} + u_J),
    \end{aligned}
\end{align}
and it is \emph{linear on inputs $J$} if it is both additive and proportional on inputs $J$.
%
%
\begin{lemma}[\protect{\cf~\cite[Proposition~II.8]{Fliess:1981}}]
    \label{lem:linearity}
    The following conditions are equivalent for a causal analytic functional $F_g$
    and a set of inputs $J \subseteq \set {1, \dots, m}$.
    \begin{enumerate}[label=(\arabic*)]
        \item $F_g$ is additive on inputs $J$.
        \item $F_g$ is proportional on inputs $J$.
        \item $F_g$ is linear on inputs $J$.
        \item The support of $g$ satisfies
        \begin{align}
            \label{eq:linearity}
            \support g \subseteq \Sigma_{\setminus J}^* \cdot \Sigma_J \cdot \Sigma_{\setminus J}^*.
        \end{align}
    \end{enumerate}
\end{lemma}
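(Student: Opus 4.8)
The plan is to grade the series $g$ by the number of letters drawn from $\Sigma_J$. For a word $w \in \Sigma^*$ let $d(w)$ be the number of occurrences in $w$ of letters from $\Sigma_J$, and let $g^{(d)}$ be the restriction of $g$ to the words with $d(w) = d$, so that $g = \sum_{d \geq 0} g^{(d)}$. The crucial combinatorial observation is that each iterated integral is homogeneous in the $J$-inputs: reading the inductive definition~\cref{eq:iterated integral}, every letter $a_j \in \Sigma_J$ along $w$ contributes exactly one factor $u_j$, so that $F_w(u_{\setminus J} + \alpha \cdot u_J) = \alpha^{d(w)} \cdot F_w(u_{\setminus J} + u_J)$ for every scalar $\alpha \in \R$. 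Since $\ord{F_w(u)} \geq \length{}{w} \geq d(w)$, the family $\setof{g_w \cdot F_w(u)}{w}$ is summable and, grouping by degree, $F_g(u_{\setminus J} + \alpha \cdot u_J) = \sum_{d \geq 0} \alpha^d \cdot F_{g^{(d)}}(u_{\setminus J} + u_J)$. Finally, the set of words with $d(w) = 1$ is exactly $\Sigma_{\setminus J}^* \cdot \Sigma_J \cdot \Sigma_{\setminus J}^*$, so the support condition~\cref{eq:linearity} is equivalent to $g = g^{(1)}$, i.e.\ to $g^{(d)} = 0$ for all $d \neq 1$.

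Next I would show that additivity and proportionality each force this concentration in degree one. For proportionality~\cref{eq:proportionality}, substituting the homogeneity identity gives, for every fixed $u$ and every $\alpha \in \R$, the equation $\sum_{d} \alpha^d \cdot c_d = \alpha \cdot \sum_d c_d$ in $\D$, where $c_d := F_{g^{(d)}}(u)$. Extracting the coefficient of a fixed power $t^n$ turns this into an identity of polynomials in $\alpha$ — finite because $\ord{c_d} \geq d$ forces only finitely many $d$ to contribute at each $t^n$ — which holds for all real $\alpha$; comparing coefficients yields $c_d = 0$ for $d \neq 1$. As $u$ ranges over $\D^m$ this gives $F_{g^{(d)}} = \zero$, whence $g^{(d)} = 0$ by~\cref{lem:equality}, so that condition~(2) implies~\cref{eq:linearity}. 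Additivity~\cref{eq:additivity} is treated analogously: writing $\Phi(w_J) := F_g(u_{\setminus J} + w_J)$, the Cauchy functional equation over $\Q$ first yields $\Phi(q \cdot w_J) = q \cdot \Phi(w_J)$ for all $q \in \Q$, after which the same polynomial-identity argument (now using the infinitely many values $q \in \Q$) gives $g^{(d)} = 0$ for $d \neq 1$, so that condition~(1) also implies~\cref{eq:linearity}.

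For the converse I would assume the support condition~\cref{eq:linearity}, i.e.\ $g = g^{(1)}$. Proportionality is then immediate from the degree-one instance of the homogeneity identity, and additivity holds because each $F_w$ with $d(w) = 1$ is an $\R$-linear function of the single $J$-input it involves: it is obtained from that input by multiplications with $J$-independent series and by integrations, all of which are $\R$-linear, so $F_{g^{(1)}}$ is additive in $u_J$. Hence condition~(3) holds, and since linearity is by definition the conjunction of additivity and proportionality, the implications from condition~(3) to conditions~(1) and~(2) are trivial; this closes the cycle of equivalences. I expect the main obstacle to be the careful justification that the scalar identities in $\alpha$ (resp.\ $q$) may be read coefficientwise as genuine finite polynomial identities and hence compared term by term — this is exactly where the order bound $\ord{F_w(u)} \geq d(w)$ is needed — together with the repeated appeal to~\cref{lem:equality}, which is what allows the argument to descend from the vanishing of the functionals $F_{g^{(d)}}$ to the vanishing of the series $g^{(d)}$ themselves.
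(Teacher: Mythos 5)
Your proposal is correct, and its skeleton coincides with the paper's: both grade $g$ by the number $d(w)$ of $\Sigma_J$-letters, both rest on the homogeneity identity $I_w(u_{\setminus J} + \alpha\cdot u_J) = \alpha^{d(w)}\cdot I_w(u_{\setminus J}+u_J)$ (the paper's claim~\cref{eq:scalar multiple claim}), both prove $(4)\Rightarrow(1)$ by $\R$-linearity of integration on degree-one words, and both descend from vanishing of functionals to vanishing of series via \cref{lem:equality}. Where you diverge is in the hard directions. For $(2)\Rightarrow(4)$ the paper simply evaluates proportionality at the single scalar $\alpha=2$, so that all graded pieces collapse into one series $g'$ with $g'_w=(2^{n_w}-2)\cdot g_w$; one application of \cref{lem:equality} plus the fact that $2^n-2\neq 0$ for $n\neq 1$ finishes. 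You instead read the identity coefficientwise in $t^n$ as a genuine polynomial identity in $\alpha$ (legitimate thanks to the order bound $\mathsf{ord}(F_{g^{(d)}}(u))\geq d$) and compare coefficients. For $(1)\Rightarrow(4)$ the paper sets $u_J=v_J$ to manufacture the same $\alpha=2$ situation directly, while you take a detour through the Cauchy functional equation to obtain $\Q$-homogeneity first and then rerun the polynomial-identity argument over infinitely many rationals. Both routes are sound; the paper's single-evaluation trick is shorter and needs only one zeroness deduction, while your version is more systematic: it isolates each graded piece $g^{(d)}$, $d\neq 1$, separately and makes explicit that additivity alone already forces rational homogeneity, at the cost of extra machinery for the same conclusion.
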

\noindent
\cite[Proposition~II.8]{Fliess:1981} corresponds to $J = \set{1, \dots, m}$.
\begin{proof}
    For every $n \in \N$, let $L_n$ be the set of words
    with exactly $n$ occurrences of letters from $\Sigma_J$.
    For instance, $L_0 = \Sigma_{\setminus J}^*$
    and $L_1 = \Sigma_{\setminus J}^* \cdot \Sigma_J \cdot \Sigma_{\setminus J}^*$.
    The implications ``$(4) \limplies (1)$'' and ``$(4) \limplies (2)$''
    follow immediately from~\cref{eq:additivity claim} and~\cref{eq:scalar multiple claim} (with $n = 1$), proved in the following two claims.
    \begin{claim*}
        For every $w \in L_1$ we have
        \begin{align}
            \label{eq:additivity claim}
            I_w(u_{\setminus J} + u_J + v_J)
                &= I_w(u_{\setminus J} + u_J) + I_w(u_{\setminus J} + v_J).
        \end{align}
    \end{claim*}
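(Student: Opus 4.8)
The plan is to prove \cref{eq:additivity claim} by exhibiting the iterated integral $I_w$ as a function that is \emph{linear} in the single factor contributed by the unique $\Sigma_J$-letter of $w$, while being completely insensitive to how the $J$-part of the input is split. Since $w \in L_1$ has exactly one occurrence of a letter from $\Sigma_J$, I would first write $w = w_1 \cdot a_j \cdot w_2$ with $a_j \in \Sigma_J$ (so $j \in J$) and $w_1, w_2 \in \Sigma_{\setminus J}^*$; this factorisation exists and is unique. The three inputs appearing in \cref{eq:additivity claim} are $u_{\setminus J} + u_J + v_J$, $u_{\setminus J} + u_J$, and $u_{\setminus J} + v_J$, and I will track the factors each letter of $w$ contributes to $I_w$ under the recursive definition~\cref{eq:iterated integral}.

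The argument rests on two observations. First, any letter $a_i \in \Sigma_{\setminus J}$ has $i \notin J$, so the $i$-th components of all three inputs coincide with the $i$-th component of $u_{\setminus J}$ (the parts $u_J$ and $v_J$ vanish outside $J$); consequently every factor contributed by a letter of $w_1$ or $w_2$ is identical across the three inputs. Second, the $j$-th components of the three inputs are $u_j + v_j$, $u_j$, and $v_j$ respectively, so the single factor contributed by $a_j$ is exactly where additivity is injected. I would then run an induction on the length of the prefix $w_1$. In the base case $w_1 = \e$ we have $I_{a_j \cdot w_2}(s) = \int (s_j \cdot I_{w_2}(s))$ for any input $s$; as $w_2 \in \Sigma_{\setminus J}^*$, the inner value $I_{w_2}$ depends only on the $\setminus J$-part and hence equals a common series $\phi$ for all three inputs, while $s_j$ ranges over $u_j + v_j$, $u_j$, $v_j$. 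Linearity of multiplication and of the formal integral then gives $\int((u_j + v_j)\cdot \phi) = \int(u_j \cdot \phi) + \int(v_j \cdot \phi)$, which is precisely \cref{eq:additivity claim}.

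For the inductive step $w = a_i \cdot w'$ with $a_i \in \Sigma_{\setminus J}$, the word $w'$ again lies in $L_1$, the leading factor equals the common $i$-th component of $u_{\setminus J}$ for all three inputs, and pushing the formal integral and the product through the additive identity provided by the induction hypothesis for $w'$ yields the claim for $w$. I do not expect a genuine obstacle here, since the computation is essentially bookkeeping; the only point needing care is to verify that $\Sigma_{\setminus J}$-letters really do contribute identical factors across the three inputs, which is exactly where one uses that $u_J$ and $v_J$ are supported on $J$. Finally, the companion proportionality identity \cref{eq:scalar multiple claim} will follow from the same induction, replacing the splitting $u_j + v_j$ at the distinguished letter by the scaling $\alpha \cdot u_j$ and invoking homogeneity of the integral in place of additivity.
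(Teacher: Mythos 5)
Your proposal is correct and follows essentially the same route as the paper's proof: factor $w = w_1 \cdot a_j \cdot w_2$ around the unique $\Sigma_J$-letter, induct on the prefix $w_1$, inject additivity via linearity of the formal integral at $a_j$ in the base case (using that $I_{w_2}$ is insensitive to the $J$-components since $w_2 \in \Sigma_{\setminus J}^*$), and push the integral through the induction hypothesis in the inductive step. The only cosmetic difference is that you name the common inner series $\phi$ where the paper rewrites $I_y(u_{\setminus J}+u_J+v_J)$ term by term, which changes nothing of substance.
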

    \begin{claimproof}
        The claim follows by linearity of integration.
        Write $w = x \cdot a_j \cdot y \in L_1$ with $a_j \in \Sigma_J$ and $x, y \in L_0$.
        We proceed by induction on $x$.
        We will use the fact that $I_w(u)$ does not depend on $u_J$
        when $w$ does not contain letters from $\Sigma_J$.
        In the base case $w = a_j \cdot y$, we have
        $I_{a_j \cdot y}(u) = \int (u_j \cdot I_y(u))$,
        and we can apply linearity of integration:
        \begin{align*}
            &I_w(u_{\setminus J} + u_J + v_J)
            = \int ((u_j + v_j) \cdot I_y(u_{\setminus J} + u_J + v_J)) = \\
            &= \int ((u_j \cdot I_y(u_{\setminus J} + u_J + v_J)) + (v_j \cdot I_y(u_{\setminus J} + u_J + v_J))) = \\
            &= \int (u_j \cdot I_y(u_{\setminus J} + u_J)) + \int (v_j \cdot I_y(u_{\setminus J} + v_J)) = \\
            &= I_w(u_{\setminus J} + u_J) + I_w(u_{\setminus J} + v_J).
        \end{align*}
        For the inductive case, $w = a_h \cdot w'$
        for $a_h \in \Sigma_{\setminus J}$ and $w' \in L_1$:
        \begin{align*}
            &I_w(u_{\setminus J} + u_J + v_J)
            = \int (u_h \cdot I_{w'}(u_{\setminus J} + u_J + v_J)) = \\
            &= \int (u_h \cdot (I_{w'}(u_{\setminus J} + u_J) + I_{w'}(u_{\setminus J} + v_J))) = \\
            &= \int (u_h \cdot I_{w'}(u_{\setminus J} + u_J)) + \int (u_h \cdot I_{w'}(u_{\setminus J} + v_J)) = \\
            &= I_w(u_{\setminus J} + u_J) + I_w(u_{\setminus J} + v_J). \qedhere
        \end{align*}
    \end{claimproof}
    \begin{claim*}
        For every $n \in \N$ and $w \in L_n$,
        \begin{align}
            \label{eq:scalar multiple claim}
            F(u_{\setminus J} + \alpha \cdot u_J) = \alpha^n \cdot F(u_{\setminus J} + u_J),
            \quad \forall \alpha \in \R.
        \end{align}
    \end{claim*}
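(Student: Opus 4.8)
The plan is to establish the homogeneity identity by induction on the length of $w$, reading the statement at the level of the iterated integral $I_w$, with each occurrence of a letter from $\Sigma_J$ contributing one factor of $\alpha$. The intuition is transparent: rescaling the $J$-components of the input by $\alpha$ scales the integrand exactly at those integration steps indexed by a letter of $\Sigma_J$, of which a word $w \in L_n$ contains precisely $n$; by linearity of integration each such step pulls out one factor of $\alpha$, while the $\Sigma_{\setminus J}$-steps contribute none, for a total factor $\alpha^n$.

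For the base case $w = \e$ we have $n = 0$ and $I_\e(u) = 1$ is constant in the input by \cref{eq:iterated integral}, so both sides equal $1 = \alpha^0$. For the inductive step I would write $w = a_j \cdot w'$ and unfold $I_w(u) = \int(u_j \cdot I_{w'}(u))$, splitting into two cases according to the head letter. If $a_j \in \Sigma_J$ then $w' \in L_{n-1}$ and the $j$-th component of $u_{\setminus J} + \alpha \cdot u_J$ is $\alpha \cdot u_j$; applying the induction hypothesis to $I_{w'}$ supplies a factor $\alpha^{n-1}$, and pulling both this and the explicit $\alpha$ out of the integral by linearity yields $\alpha^n$. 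If instead $a_j \in \Sigma_{\setminus J}$ then $w' \in L_n$ and the $j$-th component is unchanged (the convention $u_0 = 1$ together with $0 \notin J$ covers the case $j = 0$), so the induction hypothesis already gives $\alpha^n$ and linearity carries it past the outer $\int$ unchanged.

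I do not expect a genuine obstacle here, as the argument is pure bookkeeping; the only point needing care is tracking how the number of $\Sigma_J$-letters matches the exponent across the two cases (it decreases by one in the first and is preserved in the second), which is exactly why $n$ is the correct induction invariant. Finally, specialising \cref{eq:scalar multiple claim} to $n = 1$ and summing against the coefficients $g_w$ over $w \in L_1$, where the support of $g$ lives under hypothesis~(4) by~\cref{eq:linearity}, promotes the word-level identity to $F_g(u_{\setminus J} + \alpha \cdot u_J) = \alpha \cdot F_g(u_{\setminus J} + u_J)$, that is, proportionality, giving the implication ``$(4) \limplies (2)$'' exactly as the additivity claim gives ``$(4) \limplies (1)$''.
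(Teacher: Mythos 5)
Your proof is correct and is essentially the paper's own argument written out in full: the paper disposes of this claim in one line (``induction on $n$, using linearity of integration as in the proof of the additivity claim''), and your structural induction on $w$ --- with the two head-letter cases, the invariant $n$ tracking the $\Sigma_J$-letters, and a factor of $\alpha$ extracted by $\R$-linearity of the formal integral exactly when the head letter lies in $\Sigma_J$ --- is precisely that sketch made explicit, including the correct observation that $u_0 = 1$ and $0 \notin J$ cover the drift letter $a_0$. Your closing remark, that taking $n = 1$ and summing against $g_w$ over $w \in L_1$ yields the implication $(4) \Rightarrow (2)$, also matches exactly how the paper uses the claim.
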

    \begin{claimproof}
        The proof is by induction on $n$, using linearity of integration as in the proof of~\cref{eq:additivity claim}.
    \end{claimproof}
    Regarding the implication ``$(1) \limplies (4)$'', assume that $F_g$ is additive on inputs $J$,
    and we need to show that $g$ is zero on $L_0 \cup L_2 \cup L_3 \cdots$.
    We can rewrite additivity~\cref{eq:additivity} as
    \begin{align*}
        \begin{aligned}
            &\sum_{w \in \Sigma^*} g_w \cdot (I_w(u + v_J)
            - I_w(u_{\setminus J} + u_J) -  I_w(u_{\setminus J} + v_J)) = 0.
        \end{aligned}
    \end{align*}
    Taking $u_J = v_J$ and using the fact that $\set{L_0, L_1, \dots}$ is a partition of $\Sigma^*$, we have
    \begin{align*}
        \begin{aligned}
            &\sum_{n = 0}^\infty \sum_{w \in L_n} g_w \cdot (I_w(u_{\setminus J} + 2 \cdot u_J)
            - 2 \cdot I_w(u_{\setminus J} + u_J)) = 0.
        \end{aligned}
    \end{align*}
    By~\cref{eq:scalar multiple claim} and since $u = u_{\setminus J} + u_J$ we can write
    \begin{align*}
        \begin{aligned}
            &\sum_{n = 0}^\infty \sum_{w \in L_n} g_w \cdot (2^n \cdot I_w(u)
            - 2 \cdot I_w(u)) = 0.
        \end{aligned}
    \end{align*}
    For every $w \in \Sigma^*$, let $n_w$ be the number of occurrences of $a_j$ in $w$.
    Let $g' \in \series \R \Sigma$ be the series obtained from $g$
    by setting $g'_w := (2^{n_w} - 2) \cdot g_w$ for all $w \in \Sigma^*$.
    With this notation, the last equation becomes $\sum_{w \in \Sigma^*} g'_w \cdot I_w(u) = 0$,
    but since $g'$ is zero on $L_1$, we have $\sum_{w \in L_0 \cup L_2 \cup \cdots} g'_w \cdot I_w(u) = 0$.
    By~\cref{lem:equality}, $g'$ must be zero on $L_0 \cup L_2 \cup \cdots$.
    Since $2^{n_w} - 2$ is never zero on the latter language,
    $g$ is zero on $L_0 \cup L_2 \cup \cdots$, as required.

    The same proof shows also ``$(2) \limplies (4)$'' by taking $\alpha = 2$ in~\cref{eq:proportionality}.
    Since linearity $(3)$ is equivalent to being both additive $(1)$ and proportional $(2)$,
    the proof is complete.
\end{proof}


The \emph{linearity problem} takes as input a polynomial system and a set of inputs $J \subseteq \set{1, \dots, m}$
and amounts to decide whether the corresponding causal functional is linear on inputs $J$.
Thanks to~\cref{lem:linearity}, linearity is reduced to the regular support inclusion query \cref{eq:linearity}
for the commutative regular language $\Sigma_{\setminus J}^* \cdot \Sigma_J \cdot \Sigma_{\setminus J}^*$.
By~\cref{thm:regular support inclusion}, we obtain the main result of this section.
\begin{theorem}
    The linearity problem for polynomial systems is decidable.
\end{theorem}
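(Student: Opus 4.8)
The plan is to reduce the linearity problem to the commutative regular support inclusion problem, which is already known to be decidable by~\cref{thm:regular support inclusion}. The reduction rests on two ingredients available from the preliminaries: the coincidence result~\cref{thm:coincidence}, which yields a finite (shuffle automaton) presentation of the generating series, and the combinatorial characterisation of linearity in~\cref{lem:linearity}.

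First I would take the input polynomial system $\SS$ together with a set of inputs $J \subseteq \set{1, \dots, m}$, and use~\cref{thm:coincidence} to compute a shuffle automaton recognising the shuffle-finite series $g \in \series \R \Sigma$ for which $\sem \SS = F_g$. Because all data of $\SS$ is rational, the resulting automaton has rational coefficients and is a legitimate algorithmic input. By~\cref{lem:linearity}, the functional $F_g$ is linear on inputs $J$ if and only if $\support g \subseteq \Sigma_{\setminus J}^* \cdot \Sigma_J \cdot \Sigma_{\setminus J}^*$, so the whole question collapses to a single support inclusion query on $g$.

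It then suffices to observe that the target language has the right shape. The words in $\Sigma_{\setminus J}^* \cdot \Sigma_J \cdot \Sigma_{\setminus J}^*$ are exactly those containing precisely one occurrence of a letter from $\Sigma_J$; this membership condition depends only on the number of occurrences of each letter, so the language is commutative, and it is plainly regular. Being commutative regular, it is an admissible right-hand side for~\cref{thm:regular support inclusion}, whose algorithm decides the inclusion and hence linearity.

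The hard part is not located here: all the technical weight sits in~\cref{thm:regular support inclusion}, and ultimately in the effective closure under commutative regular support restrictions (\cref{thm:closure under commutative regular support restriction}) that powers it. At the level of this theorem the only point requiring a moment of care is verifying commutativity---rather than mere regularity---of $\Sigma_{\setminus J}^* \cdot \Sigma_J \cdot \Sigma_{\setminus J}^*$, which the ``exactly one letter from $\Sigma_J$'' description settles immediately.
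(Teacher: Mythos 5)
Your proposal is correct and follows essentially the same route as the paper: reduce via \cref{thm:coincidence} and \cref{lem:linearity} to the support inclusion query $\support g \subseteq \Sigma_{\setminus J}^* \cdot \Sigma_J \cdot \Sigma_{\setminus J}^*$, note that this language is commutative regular (membership depends only on the number of occurrences of letters from $\Sigma_J$), and conclude by \cref{thm:regular support inclusion}. Your explicit verification of commutativity is a point the paper leaves implicit, but otherwise the arguments coincide.
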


\begin{remark}
    In a similar way, one can decide whether a causal analytic functional
    is a quadratic form of its inputs.
\end{remark}

\subsection{Analytic systems}

For a tuple $u = \tuple{u_1, \dots, u_m} \in \D^m$
and an ordered set of indices $J = \set{j_1 < \cdots < i_n} \subseteq \set{1, \dots, m}$,
let $u_J := \tuple{u_{j_1}, \dots, u_{j_n}} \in \D^n$
and $u_{\setminus J} := u_{\set{1, \dots, m} \setminus J} \in \D^{m-n}$.
We write $\xi := \int u = \tuple{\int u_1, \dots, \int u_m} \in \D^m$.
For a multi-index $k = \tuple{k_1, \dots, k_n} \in \N^n$,
let $k! := k_1! \cdots k_n!$ and $\xi_J^k := \xi_{j_1}^{k_1} \cdots \xi_{j_n}^{k_n}$.
%
%
An analytic functional $F_g : \D^m \to \D$, with generating series $g \in \series \R \Sigma$,
is \emph{analytic in inputs $J$}
if it can be written as
\begin{align}
    \label{eq:analytic function}
    F_g(u) = \sum_{k \in \N^n} G_k(u_{\setminus J}) \cdot \frac {\xi_J^k} {k!},
\end{align}
where, for every $k \in \N^n$, $G_k$ is an analytic functional of the form $G_k = F_{g_k}$
with generating series $g_k \in \series \R {\Sigma_{\setminus J}}$.
In particular, when $J = \set{1, \dots, m}$, we say that $F_g$ is \emph{analytic} (in all inputs)
and it can be written as $F_g(u) = \sum_{k \in \N^m} G_k \cdot \frac {\xi^k} {k!}$ with $G_k \in \D$.
For instance $F_{a_1 a_2 + a_2 a_1}(u) = \int u_1 \int u_2 + \int u_2 \int u_1 = \xi_1 \cdot \xi_2$ is analytic,
but $F_{a_1 a_2}(u) = \int u_1 \int u_2$ is not.
We now provide a characterisation for this property.
%
%
\begin{restatable}[\protect{\cf~\cite[Proposition~II.9]{Fliess:1981}}]{lemma}{lemAnalyticCharacterisation}
    \label{lem:analytic characterisation}
    A causal analytic functional $F_g$ is analytic in $J$
    iff $g$ is commutative in $\Sigma_J$.
\end{restatable}
\noindent
As a particular case, when $J = \set{1, \dots, m}$, we recover \cite[Proposition~II.9]{Fliess:1981},
where commutative series are called \emph{exchangeable} (in French, \emph{échangeable}).

The \emph{analyticity problem} takes as input a polynomial system~\cref{eq:polynomial system}
and a set of indices $J \subseteq \set{1, \dots, m}$,
and asks whether its semantics is analytic in inputs $J$.
Thanks to~\cref{lem:analytic characterisation}, we have reduced this problem
to whether its generating series is commutative in a sub-alphabet $\Gamma$.
Thanks to~\cref{thm:commutativity}, we obtain the main result of this section.

\begin{theorem}
    The analyticity problem for polynomial systems is decidable.
\end{theorem}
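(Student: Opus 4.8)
The plan is to assemble the decidability result from the machinery already developed in the excerpt, via a reduction to a commutativity query on the generating series. The input is a polynomial system $\SS$ together with a set of indices $J \subseteq \set{1, \dots, m}$, and we must decide whether its semantics $\sem \SS$ is analytic in inputs $J$.

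First I would invoke \cref{thm:coincidence}: the semantics of $\SS$ is a causal analytic functional $\sem \SS = F_g$ whose generating series $g \in \series \R \Sigma$ is shuffle finite and, moreover, is effectively computable as a shuffle automaton from the syntactic data $\tuple{m, k, x_0, p_0, \dots, p_m, q}$ of the system. This is the step that turns a semantic object (a functional on power series) into a finite, algorithmically manipulable presentation. Second, I would apply the characterisation of \cref{lem:analytic characterisation}: $F_g$ is analytic in $J$ if and only if $g$ is commutative in the sub-alphabet $\Sigma_J = \setof{a_j}{j \in J}$. This reduces the analyticity question entirely to a combinatorial property of $g$.

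With the problem recast as ``is the shuffle-finite series $g$ commutative in $\Gamma := \Sigma_J$?'', the third step is to apply \cref{thm:commutativity}, which states precisely that the commutativity problem for shuffle-finite series (with respect to an arbitrary sub-alphabet) is decidable. Chaining these reductions yields a terminating algorithm: compute the shuffle automaton for $g$, form the sub-alphabet $\Gamma = \Sigma_J$, and run the commutativity decision procedure. Its positive answer certifies analyticity in $J$, its negative answer refutes it.

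I expect no genuine obstacle here, as every component is already available: the heavy lifting was done in \cref{thm:coincidence} (finite presentability), \cref{lem:analytic characterisation} (the semantic-to-combinatorial bridge), and \cref{thm:commutativity} (the decision procedure for $\Gamma$-commutativity). The only point deserving care is effectivity at the first step, namely that the passage from the polynomial system to a shuffle automaton for $g$ is algorithmic and operates over $\Q$; this is guaranteed by the constructive content of \cref{thm:coincidence} together with the standing assumption that all system data is rational. Given these, the theorem follows immediately.
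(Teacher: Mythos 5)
Your proposal is correct and follows exactly the paper's route: \cref{thm:coincidence} yields an effective shuffle-automaton presentation of the generating series $g$, \cref{lem:analytic characterisation} reduces analyticity in $J$ to commutativity of $g$ in $\Sigma_J$, and \cref{thm:commutativity} decides the latter. Your added remark on effectivity over $\Q$ is a sensible clarification of a point the paper leaves implicit, but it introduces no new content.
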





\section{Future directions}

%
The formal approach based on formal power series that we have chosen
does not cover all possible aspects of polynomial systems.
In particular, we have left out aspects related to the \emph{time variable},
for instance properties such as being \emph{stationary} (invariant under temporal translation~\cite[Sec.~4(a)]{Fliess:1981})
or even \emph{time-invariant} (independent from the time variable, except via the inputs).
We do not know how to express these in the formal approach.
Nonetheless, the corresponding properties on generating series,
$\deriveright {a_0} g = 0$~\cite[Proposition~II.7]{Fliess:1981},
resp., $\support g \subseteq \Sigma_{\setminus \set{a_0}}^*$,
are decidable for shuffle-finite series using the same methods we have presented.

We mention an open problem which we find interesting.
In the \emph{bilinear immersion problem} we ask whether a given polynomial system is realised by a bilinear one.
Since bilinear systems correspond precisely to analytic functionals with rational generating series~\cite{Fliess:1981},
this reduces to the \emph{rationality problem} for shuffle-finite series,
which takes as input a shuffle-finite series and amounts to decide whether it is rational.
We do not know whether rationality of shuffle-finite series is decidable.



\addtolength{\textheight}{0cm}   


\bibliographystyle{plainurl}
\bibliography{literature}

\newpage
\appendix

\section{Full proofs}

\lemFliessHomomorphism*
\begin{proof}
    We provide a proof in order to illustrate the algebraic advantages of the formal approach.
    We first establish a basic first-step decomposition formula.
    \begin{claim}
        We have the decomposition
        \begin{align}
            \label{eq:derivative}
            (F_f(u))'
                &= \sum_{j = 0}^m u_j \cdot F_{\deriveleft {a_j} f}(u).
        \end{align}
    \end{claim}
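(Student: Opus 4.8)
The plan is to expand $F_f(u)$ as a summable family and differentiate it term by term. First I would write $F_f(u) = \sum_{w \in \Sigma^*} f_w \cdot F_w(u)$ as in~\cref{eq:causal analytic functional}. Since each $F_w(u)$ has order at least $\length{} w$, and formal differentiation shifts coefficients downward by one, the family $\setof{f_w \cdot (F_w(u))'}{w \in \Sigma^*}$ is again summable; this lets me move the derivative inside the sum to get $(F_f(u))' = \sum_{w \in \Sigma^*} f_w \cdot (F_w(u))'$.

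Next I would partition $\Sigma^*$ according to the first letter, writing every nonempty word uniquely as $a_j \cdot w'$ with $j \in \set{0, \dots, m}$ and $w' \in \Sigma^*$. The empty word contributes nothing, since $F_\e(u) = 1$ gives $(F_\e(u))' = 0$. For a nonempty word, the defining recurrence $F_{a_j \cdot w'}(u) = \int(u_j \cdot F_{w'}(u))$ from~\cref{eq:iterated integral}, together with the fundamental relation $(\int v)' = v$, yields directly $(F_{a_j \cdot w'}(u))' = u_j \cdot F_{w'}(u)$. Substituting these two computations into the differentiated sum gives $(F_f(u))' = \sum_{j=0}^m \sum_{w' \in \Sigma^*} f_{a_j \cdot w'} \cdot u_j \cdot F_{w'}(u)$.

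Then I would reindex using the definition of the left derivative: $f_{a_j \cdot w'} = \coefficient{a_j \cdot w'}{f} = \coefficient{w'}{\deriveleft{a_j} f}$. Pulling the input $u_j$ out of the inner sum, the remaining factor $\sum_{w' \in \Sigma^*} \coefficient{w'}{\deriveleft{a_j} f} \cdot F_{w'}(u)$ is exactly $F_{\deriveleft{a_j} f}(u)$ by~\cref{eq:causal analytic functional}, which produces the claimed identity $(F_f(u))' = \sum_{j=0}^m u_j \cdot F_{\deriveleft{a_j} f}(u)$.

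The only delicate point, and the step I expect to require the most care, is the justification for differentiating the infinite sum term by term. I would make this precise by fixing $n$ and noting that only the finitely many words $w$ with $\length{} w \leq n+1$ can contribute to the coefficient of $t^n$ in either $(\sum_w f_w \cdot F_w(u))'$ or $\sum_w f_w \cdot (F_w(u))'$, so both sides agree coefficientwise through a finite computation. Everything else is a bookkeeping exercise in reindexing, relying only on linearity of integration and the fundamental relation.
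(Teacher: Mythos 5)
Your proof is correct and takes essentially the same approach as the paper's: differentiate the sum term by term, use the fundamental relation $(\int v)' = v$ to get $(F_{a_j \cdot w'}(u))' = u_j \cdot F_{w'}(u)$ (with the empty word contributing zero), then reindex via $f_{a_j \cdot w'} = \coefficient{w'}{\deriveleft{a_j} f}$. Your explicit coefficientwise justification for exchanging differentiation with the infinite sum is a point the paper leaves implicit, but otherwise the two arguments are identical.
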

    \begin{claimproof}
        By the fundamental relation between differentiation and integration,
        we have the following useful decomposition
        \begin{align}
            \label{eq:derivative of iterated integral}
            (I_w(u))' =
            \left\{
                \begin{array}{ll}
                    0
                        & \text{ if } w = \e, \\
                    u_j \cdot I_{w'}(u)
                        & \text{ if } w = a_j \cdot w'.
                \end{array}
            \right.
        \end{align}
        From this, we obtain~\cref{eq:derivative} as an immediate consequence:
        \begin{align*}
            (F_f(u))'
                &= \sum_{w \in \Sigma^*} g_w \cdot (I_w(u))' = \\
                &= \sum_{a_j \in \Sigma, w' \in \Sigma^*} g_{a_j \cdot w'} \cdot u_j \cdot I_{w'}(u) = \\
                &= \sum_{a_j \in \Sigma} u_j \cdot \sum_{w' \in \Sigma^*} g_{a_j \cdot w'} \cdot I_{w'}(u) = \\
                &= \sum_{a_j \in \Sigma} u_j \cdot \sum_{w' \in \Sigma^*} (\deriveleft {a_j} g)_{w'} \cdot I_{w'}(u) = \\
                &= \sum_{j = 0}^m F_{\deriveleft {a_j} f} \cdot u_j. \qedhere
        \end{align*}    
    \end{claimproof}
    Linearity~\cref{eq:scalar,eq:sum} is clear.
    We prove the product rule~\cref{eq:product}.
    Both sides are formal power series in $t$, so by coinduction it suffices to prove that they have the same constant term and derivative.
    For the constant term, we have
    \begin{align*}
        \coefficient {t^0} {(F_{f \shuffle g})}
            &= (f \shuffle g)_\e
            = f_\e \cdot g_\e, \text{ and } \\
        \coefficient {t^0} {(F_f \cdot F_g)}
            &= (\coefficient {t^0} {F_f}) \cdot (\coefficient {t^0} {F_g})
            = f_\e \cdot g_\e.
    \end{align*}
    For the derivative, by~\cref{eq:derivative} we have 
    \begin{align*}
        &(F_{f \shuffle g}(u))'
            = \sum_{j = 0}^m F_{\deriveleft {a_j} {(f \shuffle g)}}(u) \cdot u_j = \\
            &= \sum_{j = 0}^m F_{\deriveleft {a_j} f \shuffle g + f \shuffle \deriveleft {a_j} g}(u) \cdot u_j = \\
            &= \sum_{j = 0}^m (F_{\deriveleft {a_j} f \shuffle g}(u) + F_{f \shuffle \deriveleft {a_j} g}(u)) \cdot u_j = \\
            &= \sum_{j = 0}^m (F_{\deriveleft {a_j} f}(u) \cdot F_g(u) + F_f(u) \cdot F_{\deriveleft {a_j} g}(u)) \cdot u_j = \\
            &= \left(\sum_{j = 0}^m F_{\deriveleft {a_j} f}(u) \cdot u_j\right) \cdot F_g(u)
            + F_f(u) \cdot \left(\sum_{j = 0}^m F_{\deriveleft {a_j} g}(u) \cdot u_j\right) = \\
            &= (F_f(u))' \cdot F_g(u) + F_f(u) \cdot (F_g(u))' = \\
            &= (F_f(u) \cdot F_g(u))'. \qedhere
    \end{align*}
\end{proof}

We can combine the two homomorphisms~\cref{lem:Fliess homomorphism} and~\cref{lem:properties of semantics - shuffle automata} into a single homomorphism.
\begin{corollary}
    \label{cor:Fliess homomorphism}
    For every shuffle automaton $A$, let $\sem {\_}_A$ be the corresponding semantics.
    The composite map $F_{\sem {\_}_A} : \poly \R k \to (\D^m \to \D)$ is a homomorphism
    from the algebra of polynomials to the algebra of causal functionals:
    \begin{align}
        F_{\sem {c \cdot \alpha}}
            &= c \cdot F_{\sem \alpha},
            &&\forall c \in \R, \\
        F_{\sem {\alpha + \beta}}
            &= F_{\sem \alpha} + F_{\sem \beta}, \\
        F_{\sem {\alpha \cdot \beta}}
            &= F_{\sem \alpha} \cdot F_{\sem \beta}.
    \end{align}
\end{corollary}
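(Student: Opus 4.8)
The plan is to obtain the corollary essentially for free by \emph{composing} the two homomorphisms already established: the shuffle-automaton semantics $\sem{\_} : \poly \R X \to \series \R \Sigma$ of~\cref{lem:properties of semantics - shuffle automata} and the Fliess operator $F : \series \R \Sigma \to (\D^m \to \D)$ of~\cref{lem:Fliess homomorphism}. The single point worth isolating is that the shuffle product acts as the \emph{pivot} between the two: the semantics map converts the ordinary (commutative) polynomial product of configurations into the shuffle product of series, while the Fliess operator in turn converts that shuffle product into the pointwise product of functionals. Hence the composite sends polynomial multiplication directly to functional multiplication.

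First I would dispatch the scalar and additive identities, which are immediate since both $\sem{\_}$ and $F$ are linear. Chaining~\cref{eq:shuffle:sem:scalar-product} with~\cref{eq:scalar} gives $F_{\sem{c \cdot \alpha}} = F_{c \cdot \sem \alpha} = c \cdot F_{\sem \alpha}$, and chaining~\cref{eq:shuffle:sem:sum} with~\cref{eq:sum} gives $F_{\sem{\alpha + \beta}} = F_{\sem \alpha + \sem \beta} = F_{\sem \alpha} + F_{\sem \beta}$.

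The only identity carrying any content is the product rule, and even there the two lemmas do all the work. Using~\cref{eq:shuffle:sem:product} to rewrite the semantics of a polynomial product as a shuffle, and then~\cref{eq:product} to turn that shuffle into a functional product, I would compute
\begin{align*}
    F_{\sem{\alpha \cdot \beta}}
        = F_{\sem \alpha \shuffle \sem \beta}
        = F_{\sem \alpha} \cdot F_{\sem \beta}.
\end{align*}
There is no genuine obstacle here: the technical effort has already been spent in proving the two constituent lemmas (in particular the coinductive argument behind~\cref{eq:product}), and the corollary merely records that the shuffle product mediates between the commutative polynomial multiplication on configurations and the pointwise multiplication of causal functionals.
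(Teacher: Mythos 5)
Your proof is correct and is exactly the paper's argument: the corollary is obtained by composing the two homomorphisms of \cref{lem:properties of semantics - shuffle automata} and \cref{lem:Fliess homomorphism}, with the shuffle product serving as the intermediate structure (the paper states this composition without even writing out the three one-line chains you give). Nothing is missing.
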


\thmCoincidence*
\begin{proof}
    For completeness, we provide a proof of equivalence between the first and second point,
    in the syntax of shuffle automata.
    Fix an alphabet $\Sigma = \set{a_0, \dots, a_m}$.
    Consider a shuffle automaton $A = \tuple{\Sigma, X, \alpha, O, \Delta}$
    recognising the series $\sem \alpha$ with nonterminals $X = \set{X_1, \dots, X_k}$.
    We build a polynomial system $\SS = \tuple{m, k, x_0, p_0, \dots, p_m, q}$,
    where the initial state is $x_0 := O X = \tuple{O X_1, \dots, O X_k} \in \R^k$,
    for every $0 \leq j \leq m$, consider the tuple of polynomials
    $p_j := \tuple{\Delta_{a_j} X_1, \dots, \Delta_{a_j} X_k} \in \poly \R k^k$,
    and let the output polynomial be $q := \alpha \in \poly \R k$.
    We need to show $\sem \SS = F_{\sem \alpha}$.
    \begin{claim*}
        Fix an arbitrary input $u \in \D^m$
        and let $x \in \D^k$ be the corresponding unique power series solution of the polynomial system~\cref{eq:polynomial system}.
        We then have
        \begin{align}
            x = F_{\sem X} (u).
        \end{align}
    \end{claim*}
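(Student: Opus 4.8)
The plan is to show that the tuple $\tilde x := F_{\sem X}(u) = \tuple{F_{\sem{X_1}}(u), \dots, F_{\sem{X_k}}(u)} \in \D^k$ solves exactly the same initial value problem \cref{eq:polynomial system} as $x$, and then to conclude $x = \tilde x$ by the uniqueness part of the Picard--Lindel\"of theorem. Thus it suffices to verify two things: that $\tilde x$ satisfies the state equation of $\SS$, and that it carries the correct initial value.

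For the state equation, I would fix a nonterminal $X_i$ and apply the first-step decomposition \cref{eq:derivative} to the series $\sem{X_i}$, giving $(\tilde x_i)' = \sum_{j=0}^m u_j \cdot F_{\deriveleft{a_j}\sem{X_i}}(u)$. The key algebraic move is then to rewrite each left derivative via the derivation property \cref{eq:shuffle:sem:derivation} of the semantics, $\deriveleft{a_j}\sem{X_i} = \sem{\Delta_{a_j} X_i}$, and to observe that $\Delta_{a_j} X_i$ is a \emph{polynomial} in the nonterminals. The combined homomorphism $F_{\sem{\_}}$ of \cref{cor:Fliess homomorphism} then lets me pull this polynomial out of the semantics and evaluate it at $\tilde x$, i.e.\ $F_{\sem{\Delta_{a_j} X_i}}(u) = (\Delta_{a_j} X_i)(\tilde x_1, \dots, \tilde x_k)$. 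Substituting, I obtain $(\tilde x_i)' = \sum_{j=0}^m u_j \cdot (\Delta_{a_j} X_i)(\tilde x)$, which is precisely the $i$-th coordinate of the state equation of $\SS$, since $p_j = \tuple{\Delta_{a_j} X_1, \dots, \Delta_{a_j} X_k}$.

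For the initial value, I would use the identity $\coefficient{t^0}{(F_g(u))} = g_\e$ together with the definition \cref{eq:semantics} of the semantics: the constant coefficient of $\tilde x_i$ is $(\sem{X_i})_\e = O \Delta_\e X_i = O X_i$, matching the $i$-th coordinate of $x_0 = OX$. Having matched both the vector field and the initial condition, uniqueness yields $x = \tilde x = F_{\sem X}(u)$. I expect the main obstacle to be the algebraic chain in the middle step: one must correctly compose the differentiation--integration decomposition \cref{eq:derivative}, the intertwining \cref{eq:shuffle:sem:derivation} of $\Delta_{a_j}$ with $\deriveleft{a_j}$, and the homomorphism property \cref{cor:Fliess homomorphism}, so that applying the polynomial vector field \emph{symbolically} inside the automaton coincides with evaluating it on the analytic solution $\tilde x$. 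Once this bridge is in place, matching the $t^0$-coefficients and invoking Picard--Lindel\"of is routine.
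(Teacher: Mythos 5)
Your proposal is correct and follows essentially the same route as the paper's own proof: both verify via Picard--Lindel\"of uniqueness that $F_{\sem X}(u)$ satisfies the initial condition (matching $t^0$-coefficients with $OX$) and the state equation, using exactly the same chain \cref{eq:derivative}, then \cref{eq:shuffle:sem:derivation}, then \cref{cor:Fliess homomorphism}. The only cosmetic difference is that you argue coordinate-wise in $X_i$ while the paper writes the identical computation in tuple form.
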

    \begin{claimproof}
        By uniqueness of solutions,
        it suffices to show that the tuple $F_{\sem {X}} = \tuple{F_{\sem {X_1}}, \dots, F_{\sem {X_k}}}$ satisfies
        the differential equation~\cref{eq:polynomial system} and the initial condition.
        Regarding the latter, $\coefficient {t^0} {(F_{\sem X} (u))} = \coefficient \e {\sem X} = O X = x(0)$.
        Regarding the former, we have 
        \begin{align*}
            &(F_{\sem X}(u))' =
                &&\text{(by~\cref{eq:derivative})} \\
            &= \sum_{j = 0}^m u_j \cdot F_{\deriveleft {a_j} \sem X}(u) =
                &&\text{(by~\cref{eq:shuffle:sem:derivation})} \\
            &= \sum_{j = 0}^m u_j \cdot F_{\sem {\Delta_{a_j} X}}(u) = 
                &&\text{(def.~of $p_j$)} \\
            &= \sum_{j = 0}^m u_j \cdot F_{\sem {p_j}}(u) = 
                &&\text{(by~\cref{cor:Fliess homomorphism})} \\
            &= \sum_{j = 0}^m u_j \cdot p_j(F_{\sem X})(u) =
                &&\text{(pointwise op.)} \\
            &= \sum_{j = 0}^m u_j \cdot p_j(F_{\sem X}(u)). \qedhere
        \end{align*}
    \end{claimproof}
    Thanks to the claim and~\cref{cor:Fliess homomorphism}, 
    for every input $u \in \D^m$ we have
    $F_{\sem \alpha} (u) = \alpha(F_{\sem X}(u)) = \alpha(x) = \sem \SS (u)$,
    as required.

    For the other direction, consider a polynomial system $\SS = \tuple{m, k, x_0, p_0, \dots, p_m, q}$.
    %
    We construct a shuffle automaton $A = \tuple{\Sigma, X, \alpha, O, \Delta}$
    over alphabet $\Sigma = \set{a_0, \dots, a_m}$,
    with nonterminals $X = \set{X_1, \dots, X_k}$,
    initial configuration $\alpha := q$,
    output function $O X := x(0)$,
    and transitions $\Delta_{a_j} X := p_j(X)$, for all $0 \leq j \leq m$.
    We need to show $\sem \SS = F_{\sem \alpha}$.
    %
    The proof is as in the previous direction.
\end{proof}

\thmClosureUnderCommutativeRegularSupportRestrictions*
\begin{proof}
    In the proof we have used the following claim.
    \claimRegularSupportRestrictions*
    We provide a complete proof of the claim.
    \begin{claimproof}
        Double restriction~\cref{eq:claim:restrict}
        and linearity, \cref{eq:claim:const,eq:claim:sum}, follow directly from the definitions.
        Regarding shuffle~\cref{eq:claim:shuffle},
        consider the following identity:
        \begin{align}
            \label{eq:claim:shuffle2}
            \restrict f x \shuffle \restrict g y
                &= \restrict {(\restrict f x \shuffle \restrict g y)} {x \cdot y},
                &&\forall x, y \in M.
        \end{align}
        %
        It follows from the fact that all words in the support of $u \shuffle v$ ($u, v \in \Sigma^*$)
        are mapped to the same monoid element $h(u) \cdot h(v)$, as shown in the following calculation:
        \begin{align*}
            &\restrict f x \shuffle \restrict g y
             = \left(\sum_{h(u) = x} f_u \cdot u\right) \shuffle \left(\sum_{h(v) = y} g_v \cdot v\right) =
                &&\text{} \\
            &= \sum_{h(u) = x, h(v) = y} f_u g_v \cdot (u \shuffle v) =
                &&\text{} \\
            &= \sum_{h(u) = x, h(v) = y} f_u g_v \cdot \restrict {(u \shuffle v)} {x \cdot y} =
                &&\text{} \\
            &= \restrictsmall {(\sum_{h(u) = x, h(v) = y} f_u g_v \cdot (u \shuffle v))} {x \cdot y} =
                &&\text{} \\
            &= \restrict {(\restrict f x \shuffle \restrict g y)} {x \cdot y}.
        \end{align*}
        A calculation suffices to establish~\cref{eq:claim:shuffle} from~\cref{eq:claim:shuffle2}:
        \begin{align*}
            &\restrict {(f \shuffle g)} m =
                &&\text{(by~\cref{eq:decomposition})} \\
            &= \restrictsmall {((\sum_{x \in M} \restrict f x) \shuffle (\sum_{y \in M} \restrict g y))} m =
                &&\text{} \\
            &= \sum_{x, y \in M} \restrict {(\restrict f x \shuffle \restrict g y)} m =
                &&\text{(by~\cref{eq:claim:shuffle2})} \\
            &= \sum_{x, y \in M} \restrict {\restrict {(\restrict f x \shuffle \restrict g y)} {x \cdot y}} m =
                &&\text{(by~\cref{eq:claim:restrict})}\\
            &= \sum_{m = x \cdot y} \restrict {(\restrict f x \shuffle \restrict g y)} {x \cdot y} =
                &&\text{(by~\cref{eq:claim:shuffle2})} \\
            &= \sum_{m = x \cdot y} (\restrict f x \shuffle \restrict g y),
        \end{align*}
        establishing~\cref{eq:claim:shuffle}.
%
        Regarding~\cref{eq:claim:derive}, take an arbitrary $w \in \Sigma^*$
        and we show that both sides agree on the coefficient of $w$.
        There are two cases.
        In the first case, suppose $h(a\cdot w) = h(a) \cdot h(w) = m$.
        %
        The coefficient of the \lhs~is
        $\coefficient w {(\deriveleft a {(\restrict f m)})} = \coefficient {a \cdot w} (\restrict f m) = f_{a \cdot w}$.
        And that of the \rhs, obtained in a unique way for $m' = h(w)$, is also $f_{a \cdot w}$.
        In the second case, suppose $h(a\cdot w) = h(a) \cdot h(w) \neq m$.
        As above, the coefficient of the \lhs~is $\coefficient {a \cdot w} (\restrict f m) = 0$,
        and that of the \rhs, obtained in a unique way for $m' = h(w)$, is also $0$.
    \end{claimproof}
\end{proof}

\lemAnalyticCharacterisation*
\begin{proof}



    For the ``if'' direction, assume that $g$ is commutative in $\Sigma_J$.
    %
    We use the fact that $g$ is commutative in $\Sigma_J$
    to group together terms with the same coefficient.
    Note that $\Sigma^*$ can be written as a disjoint union $\bigcup_{k \in \N^n} L_k$,
    where $L_k$ is the set of words with exactly $k$ occurrences of letters from $\Sigma_J$.
    In other words, $L_k$ is the support of $\Sigma_{\setminus J}^* \shuffle \Sigma_J^{\shuffle k}$,
    where $\Sigma_J^{\shuffle k} := a_{j_1}^{\shuffle k_1} \shuffle \cdots \shuffle a_{j_n}^{\shuffle k_n}$.
    Moreover, for every $w' \in \Sigma_{\setminus J}^*$,
    the series $g$ takes the same value on all words $w$ in the support of $w' \shuffle \Sigma_J^{\shuffle k}$,
    and this value is the coefficient of $w'$ in the series
    $g^{(k)} := \restrictsmall {(\deriveleft {a_{j_1}^{k_1} \cdots a_{j_n}^{k_n}} g)}{\Sigma_{\setminus J}^*} \in \series \R {\Sigma_{\setminus J}}$.
    The coefficient of $w$ in $w' \shuffle \Sigma_J^{\shuffle k}$ is $k!$,
    which we need to discount for.
    Finally, let $F_{\Sigma_J} := \tuple{F_{a_{j_1}}, \dots, F_{a_{j_n}}}$
    and we can thus write
    \begin{align*}
        &F_g(u)
        = \sum_{w \in \Sigma^*} g_w \cdot F_w(u) = \\
        &= \sum_{k \in \N^n} \sum_{w' \in \Sigma_{\setminus J}^*} \sum_{w \in \support{w' \shuffle \Sigma_J^{\shuffle k}}} g_w \cdot F_w(u) = \\
        &= \sum_{k \in \N^n} \sum_{w' \in \Sigma_{\setminus J}^*} g^{(k)}_{w'} \cdot \sum_{w \in \support{w' \shuffle \Sigma_J^{\shuffle k}}} F_w(u) = \\
        &= \sum_{k \in \N^n} \sum_{w' \in \Sigma_{\setminus J}^*}
            g^{(k)}_{w'} \cdot \frac 1 {k!} \cdot F_{w' \shuffle \Sigma_J^{\shuffle k}}(u) =
            &&\text{(by~\cref{eq:product})} \\
        &= \sum_{k \in \N^n} \sum_{w' \in \Sigma_{\setminus J}^*}
            g^{(k)}_{w'} \cdot \frac 1 {k!} \cdot F_{w'}(u) \cdot F_{\Sigma_J}(u)^k = \\ 
        &= \sum_{k \in \N^n} \sum_{w' \in \Sigma_{\setminus J}^*}
        g^{(k)}_{w'} \cdot F_{w'}(u) \cdot \frac {\xi_J^k} {k!} = \\
        &= \sum_{k \in \N^n} F_{g^{(k)}}(u_{\setminus J}) \cdot \frac {\xi_J^k} {k!}.
    \end{align*}
    Consequently, $F_g$ is analytic in $J$, as required.
    
    For the ``only if'' direction, assume that $F_g$ is analytic in $J$.
    By~\cref{eq:analytic function} we can write (where $g_k \in \series \R {\Sigma_{\setminus J}}$)
    \begin{align*}
        &F_g(u)
        = \sum_{k \in \N^n} F_{g_k}(u_{\setminus J}) \cdot \frac {\xi_J^k} {k!} = \\
        &= \sum_{k \in \N^n} F_{g_k}(u_{\setminus J}) \cdot \frac {F_{a_{j_1}}(u)^{k_1} \cdots F_{a_{j_n}}(u)^{k_n}} {k!} = \\
        &= \sum_{k \in \N^n} F_{g_k}(u_{\setminus J}) \cdot F_{a_{j_1}^{k_1} \shuffle \cdots \shuffle a_{j_n}^{k_n}}(u) = \\
        &= \sum_{k \in \N^n} F_{g_k \shuffle a_{j_1}^{k_1} \shuffle \cdots \shuffle a_{j_n}^{k_n}}(u) = \\
        &= \sum_{k \in \N^n} F_{h_k}(u) =
        \quad (\text{with } h_k := g_k \shuffle a_{j_1}^{k_1} \shuffle \cdots \shuffle a_{j_n}^{k_n}) \\
        &= F_{\sum_{k \in \N^n} h_k}(u).
    \end{align*}
    We have used the fact that the family of formal series $h_k$ is summable.
    Since $g_k$ is over $\Sigma_{\setminus J}$,
    these series are commutative in $\Sigma_J$, and so it is their sum.
    By~\cref{lem:equality}, we have $g = \sum_{k \in \N^n} h_k$,
    and thus $g$ is commutative in $\Sigma_J$.
\end{proof}

\end{document}